\newcommand{\MyComment}[1]{\Comment{\textcolor{blue}{#1}}}
\newcommand{\defeq}{\stackrel{\mathsmaller{\mathsf{def}}}{=}}
\newcommand{\Prob}[1]{\mathrm{Pr}\left[#1\right]\xspace}
\newcommand{\One}{\text{One}\xspace}
\newcommand{\disjoint}{C_1\! \nleftrightarrow\! C_2\xspace}
\newtheorem{theorem}{Theorem}
\newtheorem{definition}{Definition}
\newtheorem{lemma}{Lemma}
\theoremstyle{remark}
\newtheorem{remark}{Remark}
\newtheorem{obs}{Observation}
\def\elected{\mbox{\small ELECTED}}
\def\nonelected{\mbox{\small NON-ELECTED}}
\title{The Complexity of Leader Election: A Chasm at Diameter Two}
\author
{
    Soumyottam Chatterjee\thanks{Department of Computer Science, University of Houston, Houston, TX 77204, USA. Email: \texttt{schatterjee4@uh.edu, gopal@cs.uh.edu}. Supported, in part, by NSF grants CCF-1527867, CCF-1540512,  IIS-1633720,  and CCF-BSF-1717075.} \and Gopal Pandurangan$^*$ \and Peter Robinson \thanks{Department of Computing \& Software, McMaster University, Hamilton, Ontario L8S 4K1, Canada. Email: \texttt{peter.robinson@mcmaster.ca}. Supported, in part, by the Natural Sciences and Engineering Research Council of Canada (NSERC), RGPIN-2018-06322.}
}
\begin{document}

\maketitle
\thispagestyle{empty}
\begin{abstract}
This paper focuses on studying the message complexity of implicit leader election in synchronous  distributed networks of diameter two. Kutten et al.\ [JACM 2015] showed a fundamental lower bound of $\Omega(m)$ ($m$ is the number of edges in the network) on the message complexity of (implicit) leader election that applied also to Monte Carlo randomized algorithms with constant success probability; this lower bound applies for graphs that have diameter at least three. On the other hand, for complete graphs (i.e., graphs with diameter one), Kutten et al.\ [TCS 2015] established a tight bound of $\tilde{\Theta}(\sqrt{n})$ on the message complexity of randomized leader election ($n$ is the number of nodes in the network). For graphs of diameter two, the complexity was not known. 

In this paper, we settle this complexity by showing a tight bound of $\tilde{\Theta}(n)$ on the message complexity of leader election in diameter-two networks. We first give a simple randomized Monte-Carlo leader election algorithm that with high probability (i.e., probability at least $1 - n^{-c}$, for some fixed positive constant $c$) succeeds and  uses $O(n\log^3{n})$ messages and runs in $O(1)$ rounds; this algorithm works without knowledge of $n$ (and hence needs no global knowledge). We then show that any algorithm (even Monte Carlo randomized algorithms with large enough constant success probability) needs $\Omega(n)$ messages (even when $n$ is known), regardless of the number of rounds. We also present an $O(n\log{n})$ message deterministic algorithm that takes $O(\log{n})$ rounds (but needs knowledge of $n$); we show that this message complexity is tight for deterministic algorithms.

Our results show that leader election can be solved in diameter-two graphs with (essentially) linear (in $n$) message complexity and thus the $\Omega(m)$ lower bound does not apply to diameter-two graphs. Together with the two previous results of Kutten et al., our results fully characterize the message complexity of leader election vis-\`a-vis the graph diameter.\\

\textbf{Keywords:} Distributed Algorithm; Leader Election; Randomized Algorithm; Message Complexity; Time Complexity; Lower Bounds.
\end{abstract}

\newpage

\setcounter{page}{1}


\section{Introduction} \label{sec:intro}

Leader election is a classical and  fundamental problem in distributed computing. The leader election problem requires a group of processors in a distributed network to elect a unique leader among themselves, i.e., exactly one processor must output the decision that it is the leader, say, by changing a special \emph{status} component of its state to the value \emph{leader}~\cite{Lynch_1996_Book}. All the rest of the nodes must change their status component to the value \emph{non-leader}. These nodes need not be aware of the identity of the leader. This {\em implicit} variant of leader election is quite standard (cf. \cite{Lynch_1996_Book}), and has been extensively studied (see e.g., \cite{Kutten_2015_JACM} and the references therein) and is sufficient in many applications, e.g., for token generation in a token ring environment~\cite{Lann_1977}. In this paper, we focus on this implicit variant.
\footnote{In another variant, called {\em explicit} leader election,  all the non-leaders change their status component to the value \emph{non-leader}, and moreover, every node must also know the identity of the unique leader. In this variant,  $\Omega(n)$ messages is an obvious lower bound (throughout, $n$ denotes the number of nodes in the network) since every node must be informed of the leader's identity. Clearly, any lower bound for implicit leader election applies to explicit leader election as well.}

The complexity of leader election, in particular, its message and time complexity, has been extensively studied both in general graphs as well as in special graph classes such as rings and complete networks, see e.g., \cite{Lynch_1996_Book, Peleg_1990, Santoro_2006_Book, Tel_2001_Book, Kutten_2015_TCS, Kutten_2015_JACM}. While much of the earlier work focused on deterministic algorithms, recent works have studied randomized algorithms (see e.g., \cite{Kutten_2015_TCS, Kutten_2015_JACM} and the references therein). Kutten et al.\ \cite{Kutten_2015_JACM} showed a fundamental lower bound of $\Omega(m)$  ($m$ is the number of edges in the network) on the message complexity of (implicit) leader election that applied even to Monte Carlo randomized algorithms with (large-enough) constant success probability; this lower bound applies for graphs {\em that have diameter at least three}. We point that the $\Omega(m)$ lower bound applies even for algorithms that have knowledge of $n$, $m$, $D$ (throughout, $n$ denotes the number of nodes, $m$ the number of edges, and $D$ the network diameter). The lower bound proof involves constructing a ``dumb-bell" graph $G$ which consists of two regular subgraphs $G_1$ and $G_2$ (each having approximately $\frac{m}{2}$ edges) joined by a couple of ``bridge" edges (the bridge edges are added so that the regularity is preserved). Note that (even) if $G_1$ and $G_2$ are cliques (in particular, they can be any 2-connected graph) then $G$ will be of diameter (at least) three. This is the smallest diameter that makes the lower bound proof work; we refer to \cite{Kutten_2015_JACM} for details.

On the other hand, for complete graphs (i.e., diameter one), Kutten et al. \cite{Kutten_2015_TCS} established a tight bound of $\tilde{\Theta}(\sqrt{n})$ on the message complexity of randomized leader election ($n$ is the number of nodes in the network). In other words, they showed an $\tilde{O}(\sqrt{n})$ messages algorithm that elects a (unique) leader with high probability. To complement this, they also showed that any leader election algorithm in a complete graph requires $\tilde{\Omega}(\sqrt{n})$ messages to succeed with (large-enough) constant probability.

For graphs of diameter two, the message complexity was not known. In this paper, we settle this complexity by showing a tight bound of $\tilde{\Theta}(n)$ on the message complexity of leader election in diameter-two networks. In particular,  we present a simple randomized leader election algorithm that takes $O(n\log^3{n})$ messages and $O(1)$ rounds that works {\em even when $n$ is not known}. In contrast, we show that any randomized algorithm (even Monte Carlo algorithms with constant success probability) needs $\Omega(n)$ messages. Our results show that leader election can be solved in diameter-two graphs in (essentially) linear (in $n$) message complexity which is optimal (up to a $\text{polylog}(n)$ factor) and thus the $\Omega(m)$ message lower bound does not apply to diameter-two graphs. Together with the previous results \cite{Kutten_2015_JACM, Kutten_2015_TCS}, our results fully characterize the message complexity of leader election vis-\`a-vis the graph diameter (see Table \ref{table1}).
\subsection{Our Results} \label{sec:results}

This paper focuses on studying the message complexity of leader election (both randomized and deterministic) in synchronous distributed networks, in particular, in  networks of {\em diameter two}.

For our algorithms, we assume that the communication is {\em synchronous} and follows the standard $\mathcal{CONGEST}$ model~\cite{Peleg_2000_Book}, where a node can send in each round at most one message of size $O(\log{n})$ bits on a single edge.  We assume that the nodes have unique IDs.  We assume that all nodes wake up simultaneously at the beginning of the execution. (Additional details on our distributed computation model are given in Section \ref{sec:model}.)

We show the following results:
\begin{enumerate}

    \item \textbf{Algorithms:} We show that the message complexity of leader election in diameter-two graphs is $\tilde{O}(n)$, by presenting a randomized (implicit) leader election algorithm    (cf.\ Section \ref{sec:random}), that takes $O(n\log^3{n})$ messages and runs in $O(1)$ rounds with high probability (whp).
    \footnote{Throughout, ``with high probability" means with probability at least $1 - n^{-c}$, for some fixed positive constant $c$.}
    This algorithm works even without knowledge of $n$. While it is easy to design an $O(n\log{n})$ messages randomized algorithm with knowledge of $n$ (Section \ref{sec:technical-overview}), not having knowledge of $n$ makes the analysis more involved.

    We also present a {\em deterministic}  algorithm that uses only $O(n\log{n})$ messages, but takes $O(\log{n})$ rounds. Also this algorithm needs knowledge of $n$ (or at least a constant factor upper bound of $\log{n}$) (cf.\ Section \ref{sec:deter}).

    We note that all our algorithms will work seamlessly for complete networks as well.

    \item \textbf{Lower Bounds:} We show that, in general, it is not possible to improve over our algorithm substantially, by presenting a lower bound for leader election that applies also to randomized (Monte Carlo) algorithms. We show that $\Omega(n)$ messages are needed for any leader election algorithm (regardless of the number of rounds) in a diameter-two network which succeeds with any constant probability that is strictly larger than $\frac{1}{2}$ (cf.\ Section \ref{sec:lb}). This lower bound holds even in the $\mathcal{LOCAL}$ model ~\cite{Peleg_2000_Book}, where there is no restriction on the number of bits that can be sent on each edge in each round. To the best of our knowledge, this is the first non-trivial lower bound for randomized leader election in diameter-two networks.

    We also show a simple deterministic reduction that shows that any super-linear message lower bound for complete networks also applies to diameter-two networks as well (cf. Section \ref{sec:detlb}). It can be shown that $\Omega(n \log{n})$ messages is a lower bound for deterministic leader election in complete networks \cite{Afek_1991, Kutten_2017_private_communication} (under the assumption that the number of rounds is bounded by some function of $n$).
    \footnote{Afek and Gafni\cite{Afek_1991} show the $\Omega(n\log{n})$ message lower bound for complete networks under the non-simultaneous wakeup model in synchronous networks. The same message bound can be shown to hold in the simultaneous wake-up model as well under the restriction that the number of rounds is bounded by a function of $n$ \cite{Kutten_2017_private_communication}.}
    By our reduction this lower bound also applies for diameter-two networks.
    \footnote{We point out that lower bounds for complete networks do not directly translate to diameter-two networks.}

\end{enumerate}


\begin{threeparttable}
  \begin{tabular}{l l l l l}
  \toprule
   & \multicolumn{2}{l}{\textsc{Randomized}} & \multicolumn{2}{c}{\textsc{Deterministic}} \\
  Diameter & Time & Messages & Time & Messages \\
  \midrule
  {\boldmath $D = 1$:} \cite{Kutten_2015_TCS, Afek_1991}\\
  Upper Bound & $O(1)$ & $O(\sqrt{n}\log^{\frac{3}{2}}\!n)$ & $O(1)$$^\dagger$  & $O(n\log n)$ $^\dagger$\\
  Lower Bound & $\Omega(1)$ & $\Omega(\sqrt{n})$ & $\Omega(1)$  & $\Omega(n\log n)$ \\
  \midrule
  {\boldmath $ D = 2$:} & \multicolumn{4}{c}{\bf Our Results}\\
  Upper Bound & $O(1)$ & $O(n \log^3{n})$ & $O(\log n)$$^{\dagger\dagger}$ & $O(n\log n)$\tnote{$\$$} \\
  Lower Bound & $\Omega(1)$ & $\Omega(n)$ & $\Omega(1)$ & $\Omega(n\log n)$ \\
  \midrule
  {\boldmath $D \ge 3$:} \cite{Kutten_2015_JACM}\\
  Upper Bound & $O(D)$ & $O(m \log\log n)$ & $O(D \log n)$ & $O(m \log n)$ \\
  Lower Bound & $\Omega(D)$ & $\Omega(m)$ & $\Omega(D)$ & $\Omega(m)$ \\
  \bottomrule
  \end{tabular}
  \begin{tablenotes}
  \item[$\dagger$] Note that attaining $O(1)$ time requires $\Omega(n^{1 + \Omega(1)})$ messages in cliques, whereas achieving $O(n\log{n})$ messages requires $\Omega(\log{n})$ rounds; see \cite{Afek_1991}.
  \item[\$] Needs knowledge of $n$.
  \item[$\dagger\dagger$] Note that it is easy to show a $O(1)$ round deterministic algorithm that takes $O(m)$ messages. 
  \end{tablenotes}
  \caption{Message and time complexity of leader election.} 
  \label{table1}
\end{threeparttable}

\subsection{Technical Overview}\label{sec:technical-overview}

All our algorithms exploit the following simple ``neighborhood intersection" property of diameter-two graphs: Any two nodes (that are non-neighbors) have at least one neighbor in common (please refer to Observation \ref{main-diameter-two-property}). However, note that unlike complete networks (which have been extensively studied with respect to leader election --- cf.\ Section \ref{sec:related}), in diameter-two networks, nodes generally do not have knowledge of $n$, the network size (in a complete graph, this is trivially known by the degree). This complicates obtaining sublinear in $m$ (where $m$ is the number of edges) message algorithms that are fully localized (don't have knowledge of $n$). Indeed, if $n$ is known, the following is a simple randomized algorithm:  each node becomes a candidate with probability $\Theta(\frac{\log{n}}{n})$ and sends its ID to all its neighbors; any node that gets one or more messages acts as a ``referee" and notifies the candidate that has the smallest ID (among those it has received). The neighborhood intersection property implies that at least one candidate  will be chosen uniquely as the leader with high probability.

If $n$ is not known, the above idea does not work. However, we show that if each node $v$ becomes a candidate with probability $\frac{1+ \log{d_v}}{d_v}$, (where $d_v$ is the degree of $v$) then the above idea can be made to work. The main  technical difficulty is then showing that at least one candidate is present (cf.\ Section \ref{sec:number-of-candidates}) and in bounding the message complexity (cf.\ Section \ref{sec:analysis-message-complexity}). We use Lagrangian optimization to prove that on expectation at least $\Theta(\log{n})$ candidates will be selected and then use a Chernoff bound to show a high probability result.

Our $\Omega(n)$ randomized lower bound is inspired by the {\em bridge crossing} argument of \cite{Kutten_2015_JACM} and \cite{Pai_2017}. In \cite{Kutten_2015_JACM}, the authors construct a ``dumbbell" graph $G$ which is done by taking two identical regular graphs $G_1$ and $G_2$, removing an edge from each and adding them as bridge edges between $G_1$ and $G_2$ (so that regularity is preservered). The argument is that any leader election algorithm should send at least one message across one of the two bridge edges (bridge crossing); otherwise, it can be shown that the executions in $G_1$ and $G_2$ are identical  leading to election of two leaders which is not valid. The argument in \cite{Kutten_2015_JACM} shows that $\Omega(m)$ messages are needed for bridge crossing. As pointed out earlier in Section \ref{sec:intro}, this construction makes the diameter of $G$ at least three and hence does not work for diameter-two graphs. To overcome this, we modify the construction that takes two complete graphs and add a set of bridge edges (as opposed to just two); see Fig~\ref{fig:messagelb}. This creates a diameter-two graph; however, the large number of bridge edges requires a different style of argument and results in a bound different compared to \cite{Kutten_2015_JACM}. We show that $\Omega(n)$ messages (in expectation) are needed to send a message across at least one bridge.

We also present a \emph{deterministic} algorithm that uses $O(n\log{n})$ messages, but takes $O(\log{n})$ rounds. Note that, in a sense, this improves over the randomized algorithm that sends $O(n\log^3{n})$ messages (although, we did not strive to optimize the $\log$ factors). However, the deterministic algorithm is slower by a $\log(n)$-factor and is more involved compared to the very simple randomized algorithm (although its analysis is a bit more complicated). Our deterministic algorithm uses ideas similar to Afek and Gafni's \cite{Afek_1991} leader election algorithm for complete graphs; however, the algorithm is a bit more involved. Our algorithm assumes knowledge of $n$ (this is trivially true in complete networks, since every node can infer $n$ from its degree) which is needed for termination. It is not clear if one can design an $O(n\log{n})$ messages algorithm (running in say $O(\log{n})$ rounds) that does not need knowledge of $n$, which is an interesting open question (cf.\ Section \ref{sec:conc}).

Finally, we present a simple reduction that shows that superlinear (in $n$) lower bounds in complete networks also imply lower bounds for diameter-two networks, by showing how using only $O(n)$ messages and in $O(1)$ rounds, a complete network can be converted to a diameter-two network in a distributed manner. This shows that our deterministic algorithm (cf.\ Section \ref{sec:deter}) is message optimal.
\subsection{Distributed Computing Model} \label{sec:model}

The model we consider is similar to the models of~\cite{Afek_1991, Humblet_1984, Korach_1990, Korach_1987, Korach_1989}, with the main addition of giving processors access to a private unbiased coin. We consider a system of $n$ nodes, represented as an undirected  graph $G = (V, E)$. In this paper, we focus on graphs with diameter $D(G) = 2$, where $D(G)$ is the diameter of $G = (V, E)$. An obvious consequence of this is that $G$ is connected, therefore $n - 1  \leq  m  \leq  \frac{n(n-1)}{2}$, where $m = |E|$ and $n = |V|$.

Each node has a unique identifier (ID) of $O(\log{n})$ bits and runs an instance of a distributed algorithm. The computation advances in synchronous rounds where, in every round, nodes can send messages, receive messages that were sent in the same round by neighbors in $G$, and perform some local computation. Every node has access to the outcome of unbiased private coin flips (for randomized algorithms). Messages are the only means of communication; in particular, nodes cannot access the coin flips of other nodes, and do not share any memory. Throughout this paper, we assume that all nodes are awake initially and simultaneously start executing the algorithm. We note that initially nodes have knowledge only of themselves, in other words we assume the {\em clean network model} --- also called the {\em KT0 model} \cite{Peleg_2000_Book} which is standard and most commonly used.
\footnote{If one assumes the {\em KT1 model}, where nodes have an initial knowledge of the IDs of their neighbors, there exists a trivial algorithm for leader election in a diameter-two graph that uses only $O(n)$ messages.}
\subsection{Leader Election: Problem Definition}

We formally define the leader election problem here.

Every node $u$ has a special variable $\texttt{status}_u$ that it can set to a value in
\begin{center}
	$\{\bot, \nonelected, \elected \}$;
\end{center}
initially we assume $\texttt{status}_u = \bot$.

An \emph{algorithm $A$ solves leader election in $T$ rounds} if, from round $T$ on, exactly one node has its status set to $\elected$ while all other nodes are in state $\nonelected$. This is the requirement for standard (implicit) leader election. For {\em explicit} leader election, we further require that all non-leader nodes should know the identity of the leader.
\subsection{Other Related Works} \label{sec:related}

The complexity of the leader election problem and algorithms for it, especially deterministic algorithms (guaranteed to always succeed), have been well-studied. Various algorithms and lower bounds are known in different models with synchronous (as well as asynchronous) communication and in networks of varying topologies such as a cycle, a complete graph, or some arbitrary topology (e.g., see \cite{Khan_2012, Lynch_1996_Book, Peleg_1990, Santoro_2006_Book, Tel_2001_Book, Kutten_2015_TCS, Kutten_2015_JACM} and the references therein).

The study of leader election algorithms is usually concerned with both message and time complexity. We discuss two sets of results, one for complete graphs and the other for general graphs. As mentioned earlier, for complete graphs, Kutten et al. \cite{Kutten_2015_TCS} showed that $\tilde{\Theta}(\sqrt{n})$ is the tight message complexity bound for randomized (implicit) leader election. In particular, they presented an $O(\sqrt{n}\log^{3/2}{n})$  messages algorithm that ran in $O(1)$ rounds; they also showed an almost matching lower bound for randomized leader election, showing that $\Omega(\sqrt n)$ messages are needed for  any leader election algorithm that succeeds with a sufficiently large constant probability.

For deterministic algorithms on complete graphs, it is known that $\Theta(n\log{n})$ is a tight bound on the message complexity ~\cite{Afek_1991, Kutten_2017_private_communication}. In particular, Afek and Gafni \cite{Afek_1991} presented an $O(n\log{n})$ messages algorithm for complete graphs that ran in $O(\log{n})$ rounds. For complete graphs, Korach et al.~\cite{Korach_1984} and Humblet \cite{Humblet_1984} also presented $O(n\log{n})$ message algorithms. Afek and Gafni \cite{Afek_1991} presented asynchronous and synchronous algorithms, as well as a tradeoff between the message and the time complexity of synchronous {\em deterministic} algorithms for complete graphs: the results varied from a $O(1)$-time, $O(n^2)$-messages  algorithm to a $O(\log{n})$-time, $O(n\log{n})$-messages algorithm. Afek and Gafni \cite{Afek_1991}, as well as~\cite{Korach_1984, Korach_1989} showed a lower bound of $\Omega(n\log{n})$ messages for {\em deterministic} algorithms in the general case.
\footnote{This lower bound assumes non-simultaneous wakeup though. If nodes  are assured to wake up at the same time in synchronous complete networks, there exists a trivial algorithm: if a node's identity is some $i$, it waits $i$ time before it sends any message then leader election could be solved (deterministically) in $O(n)$ messages on complete graphs in synchronous networks. Recently Kutten \cite{Kutten_2017_private_communication} shows that the $\Omega(n\log{n})$ lower bound holds for simulataneous wakeup as well, if the number of rounds is bounded.}

For general graphs, the best known bounds are as follows. Kutten et al.\ \cite{Kutten_2015_JACM} showed that $\Omega(m)$ is a very general lower bound on the number of messages and $\Omega(D)$ is a lower bound on the number of rounds for any leader election algorithm. It is important to point out that their lower bounds applied for graphs with {\em diameter at least three}. Note that these lower bounds hold even for randomized Monte Carlo algorithms that succeed even with (some large enough, but) constant success probability  and apply even for implicit leader election. Earlier results, showed such lower bounds only for deterministic algorithms and only for the restricted case of comparison algorithms, where it was also required that nodes may not wake up spontaneously and that $D$ and $n$ were not known. The $\Omega(m)$ and $\Omega(D)$ lower bounds are {\em universal} in the sense that they hold for all universal algorithms (namely, algorithms that work for all graphs), apply to  every $D \geq 3$, $m$, and $n$, and hold even if $D$, $m$, and $n$ are known, all the nodes wake up simultaneously, and the algorithms can make any use of node's identities. To show that these bounds are tight, they also  present an $O(m)$ messages algorithm (this algorithm is not time-optimal). An $O(D)$ time leader election algorithm is known \cite{Peleg_1990} (this algorithm is not message-optimal). They also presented an $O(m\log{\log{n}})$ messages randomized algorithm that ran in $O(D)$ rounds (where $D$ is the network diameter) that is simultaneously almost optimal with respect to both messages and time. They also presented an $O(m\log{n})$ and $O(D\log{n})$ deterministic leader election algorithm for general graphs.

\section{A Randomized Algorithm} \label{sec:random}

In this section, we present a simple randomized Monte Carlo algorithm that works in a constant number of rounds. Algorithm~\ref{alg:main} is entirely local, as nodes do not require any knowledge of $n$. Nevertheless, we show that we can sub-sample a small number of candidates (using only local knowledge) that then attempt to become leader. In the remainder of this section, we prove the following result. 

\begin{theorem} \label{thm:randomized}
  There exists a Monte Carlo randomized leader election algorithm that, with high probability, succeeds in $n$-node networks of diameter at most two in $O(1)$ rounds, while sending $O(n\log^3n)$ messages.
\end{theorem}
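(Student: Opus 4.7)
My plan is to design a three-round protocol that leverages the diameter-two property (any two nodes are either adjacent or share a common neighbor). Each node $v$ independently becomes a \emph{candidate} with probability $p_v = (1 + \log d_v)/d_v$, a quantity that depends only on $v$'s degree. In round~1 every candidate sends its ID to each of its neighbors. In round~2 each node (candidate or not) computes the minimum-ID candidate it has seen so far---including itself if it is a candidate---and sends a single ``vote'' message back to that candidate. In round~3 a candidate declares itself the leader exactly when it has collected a vote from every one of its neighbors \emph{and} it voted for itself; all other nodes become non-leaders.

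For correctness I would argue that, conditioned on at least one candidate existing, the globally smallest candidate ID $c_{\min}$ is the unique leader. Any larger candidate $c' > c_{\min}$ fails in one of two subcases: if $c_{\min}$ is a neighbor of $c'$, then $c'$ itself saw $c_{\min}$ in round~1 and votes for $c_{\min}$ rather than for itself, breaking its own declaration rule; otherwise, the diameter-two property forces $c'$ and $c_{\min}$ to share a common neighbor $u$, who sees both candidates and votes for $c_{\min}$, so $c'$ is missing $u$'s vote. Symmetrically, every neighbor of $c_{\min}$ has $c_{\min}$ in its candidate set and therefore votes for it, so $c_{\min}$ collects every required vote and is elected.

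It then remains to show that at least one candidate exists with high probability and to bound the messages. The expected number of candidates equals $\sum_v (1+\log d_v)/d_v$. The function $f(d) = (1+\log d)/d$ is convex for $d \geq 2$ (its second derivative is $(2\log d - 1)/d^3$) and strictly decreasing, so by Jensen's inequality together with the trivial upper bound $\sum_v d_v \leq n(n-1)$, the sum is at least $n\, f(n-1) = \Omega(\log n)$, with small-degree vertices only helping. A Chernoff bound on the independent candidate indicators then gives at least $\Omega(\log n)$ candidates with probability $1 - n^{-\Omega(1)}$. For messages, the expected total is $\sum_v p_v d_v = \sum_v (1+\log d_v) = O(n\log n)$ for the candidate broadcasts, plus at most $n$ votes; applying a Chernoff/Bernstein concentration bound to the independent contributions $X_v d_v$ (where $X_v$ is the candidate indicator, each contribution bounded by $n$) yields the claimed $O(n\log^3 n)$ high-probability bound.

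The main obstacle is the convexity lower bound on $\sum_v (1+\log d_v)/d_v$, because the diameter-two constraint is structural rather than a simple bound on the degree sequence; one must combine $\sum d_v \leq n(n-1)$ with a careful Jensen-style (Lagrangian) argument and handle very small-degree vertices separately (each already contributes $\Omega(1)$ in expectation). A secondary technical point is the concentration of the message count, since individual contributions $X_v d_v$ can be as large as $n$; avoiding a loss of more than a polylogarithmic factor there is exactly what drives the final $\log^3 n$ in the message bound, rather than the expected $\log n$.
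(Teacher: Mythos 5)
Your protocol is the paper's Algorithm~\ref{alg:main} in all essentials (same candidate probability $\frac{1+\log d_v}{d_v}$, same referee/minimum-ID mechanism, same ``all neighbors must confirm me'' leader rule), and your correctness and expected-candidate arguments match the paper's: the Jensen step on the convex, decreasing function $f(d)=\frac{1+\log d}{d}$ under the constraint $\sum_v d_v = 2m \le n(n-1)$ is exactly what the paper's Lagrangian optimization (Lemma~\ref{lemma-Lagrangian-Hessian}) establishes, namely that the minimizer is the regular degree sequence, and both routes then give $E[X]=\Omega(\log n)$ candidates followed by a Chernoff bound (do make the small-degree case explicit, since $f$ is convex only for $d\ge 2$; as you note, degree-$1$ and degree-$2$ vertices are candidates with probability $1$, so they only help). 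The one place you genuinely diverge is the concentration of the message count. The paper declines to apply a tail bound directly to the non-indicator variables $X_v d_v$ and instead buckets vertices into $O(\log n)$ dyadic degree classes, applies a $0$--$1$ Chernoff bound to the number of candidates per class, and takes a union bound, which is what costs the $\log^3 n$. Your direct Bernstein bound on the independent contributions $X_v d_v \in \{0, d_v\}$ does work: $\sum_v \mathrm{Var}(X_v d_v) \le \sum_v (1+\log d_v)\, d_v = O(n^2\log n)$ and each term is at most $n$, so a deviation of $t=\Theta(n\log n)$ already yields failure probability $n^{-\Omega(1)}$; carried out carefully this actually proves the stronger $O(n\log n)$ whp message bound, an improvement the paper itself conjectures in Section~\ref{sec:conc}. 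The one trap to avoid is substituting Hoeffding for Bernstein, since $\sum_v d_v^2$ can be $\Theta(n^3)$ and that bound would only give $\tilde{O}(n^{3/2})$.
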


\begin{algorithm}[h]
\begin{algorithmic}[1]
	\State Each node $v \in V$ selects itself to be a ``candidate'' with probability $\frac{1 + \log{(d_v)}}{d_v}$, where $d_v$ is the degree of $v$. 
	\State \textbf{if} {$v$ becomes a candidate} \textbf{then} $v$ sends its ID to all its neighbors.
	\State Each node acts as a ``referee node'' for all its candidate neighbors (including, possibly itself).
	\State If a node $w$ receives ID's from its neighbors $v_1, v_2, \ldots, v_j$ (say), then $w$ computes the minimum ID of those and sends it back to those neighbors. That is, $w$ sends $\text{min}\left\{ID(v_1), 
	ID(v_2), \ldots, ID(v_j)\right\}$ back to each of $v_1, v_2, \ldots, v_j$.
	\State A node $v$ decides that it is the leader if and only if it receives its own ID from \emph{all} its neighbors. Otherwise $v$ decides that it is not the leader.
\end{algorithmic}
\caption{Randomized leader election in $O(1)$ rounds and $O(n\log^3{n})$ message complexity}
\label{alg:main}
\end{algorithm}


\subsection{Proof of Correctness: Analyzing the number of candidates selected}\label{sec:number-of-candidates}
We use the following property of diameter-$2$ graphs crucially in our algorithm.

\begin{obs}\label{main-diameter-two-property}
	Let $G = (V, E)$ be a graph of diameter $2$. Then for any $u, v \in V$, either $(u, v) \in E$ or $\exists w \in V$ such that $(u, w) \in E$ and $(v, w) \in E$, i.e., $u$ and $v$ have at least one common neighbor $w$ (say).
\end{obs}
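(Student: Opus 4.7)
The plan is to unpack the definition of graph diameter directly. Recall that the diameter $D(G)$ is defined as $\max_{u,v \in V} d_G(u,v)$, where $d_G(u,v)$ denotes the length of a shortest path between $u$ and $v$ in $G$. The hypothesis $D(G) = 2$ therefore asserts that $d_G(u,v) \le 2$ for every pair $u,v \in V$.

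I would fix an arbitrary pair $u,v \in V$ with $u \neq v$ and split into the two possible nontrivial cases for $d_G(u,v)$. If $d_G(u,v) = 1$, then by definition of graph distance there is an edge directly between them, giving $(u,v) \in E$, which is the first alternative of the statement. If $d_G(u,v) = 2$, then a shortest $u$-$v$ path has exactly two edges and hence passes through some intermediate vertex $w \in V \setminus \{u,v\}$ satisfying $(u,w) \in E$ and $(w,v) \in E$; this vertex $w$ is the desired common neighbor. Since $d_G(u,v) \ge 1$ for distinct vertices in any graph, these two cases are exhaustive, which completes the argument.

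There is essentially no obstacle here: the observation is an immediate restatement of ``diameter two'' in combinatorial terms, so the only care needed is to argue the case split cleanly. I would keep the write-up to one or two sentences, since the statement is used purely as a structural property that is invoked repeatedly in the correctness arguments for Algorithm~\ref{alg:main} and for the deterministic algorithm later on.
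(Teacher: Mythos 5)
Your argument is correct and is exactly the intended justification: the paper states this as an Observation without proof precisely because it is the immediate unpacking of $d_G(u,v)\le 2$ into the two cases $d_G(u,v)=1$ (adjacency) and $d_G(u,v)=2$ (a length-two path through a common neighbor $w$). Nothing further is needed.
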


We note that if one or more candidates are selected, then only the candidate node with the minimum ID is selected as the leader. That is, the leader is unique, and therefore the algorithm produces the correct output. The only case when the algorithm may be wrong is if no candidates are selected to begin with, in which case no leader is selected. In this section, we show that, with high probability, at least two candidates are selected.

We make use of the following fact in order to show that.

\begin{lemma}\label{lemma-Lagrangian-Hessian}
	Let $f(x_1, x_2, \ldots, x_n)$ be a function of $n$ variables $x_1, x_2, \ldots, x_n$, where $x_1, x_2, \ldots, x_n$ are positive reals. $f$ is defined as
	\begin{align*}
		&f(x_1, x_2, \ldots, x_n)  \defeq  \sum_{i = 1}^n \frac{1 + \log{x_i}}{x_i}\text{.}
	\end{align*}
	Let $C$ be a constant $\geq n\sqrt{2}$. Then $f(x_1, x_2, \ldots, x_n)$ is minimized, subject to the constraint $\sum_{i = 1}^n x_i = C$, when $x_i  =  \frac{C}{n}$, for all $1 \leq i \leq n$. The minimum value that 
	$f(x_1, x_2, \ldots, x_n)$ takes is at the point
	\begin{center}
		$(\frac{C}{n}, \frac{C}{n}, \ldots, \frac{C}{n})$, and is given by
	\end{center}	
	\begin{align*}
		f^{\text{min}}   =   f(\frac{C}{n}, \frac{C}{n}, \ldots, \frac{C}{n})   =   \frac{n^2}{C}(1 + \log{(\frac{C}{n})})\text{.}
	\end{align*}
\end{lemma}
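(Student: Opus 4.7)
The approach is the textbook Lagrange multiplier method, followed by a Hessian check. Writing $g(x) \defeq (1+\log x)/x$ so that $f(x_1,\ldots,x_n) = \sum_{i=1}^n g(x_i)$, and forming the Lagrangian $L = \sum_i g(x_i) - \lambda\bigl(\sum_i x_i - C\bigr)$, I compute $g'(x) = -\log(x)/x^2$ by a one-line quotient-rule calculation. The first-order conditions $\partial L / \partial x_i = 0$ become $g'(x_i) = \lambda$ for every $i$, and the symmetric point $x_1 = \cdots = x_n = C/n$ visibly satisfies this system (with $\lambda = g'(C/n)$), so it is a critical point. Direct substitution yields the candidate minimum value $n\cdot g(C/n) = \frac{n^2}{C}(1 + \log(C/n))$, matching the claimed $f^{\min}$.

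To confirm that this critical point is actually a minimum on the constraint surface, I would next compute the Hessian of $f$. Since $f$ is separable, its Hessian is the diagonal matrix with entries $g''(x_i) = (2\log x_i - 1)/x_i^3$. At the symmetric point this becomes $g''(C/n)\cdot I$, and because the constraint $\sum_i x_i = C$ is linear, the Hessian of the Lagrangian restricted to its tangent space is simply this diagonal matrix restricted to $\{v : \sum_i v_i = 0\}$. I would invoke the hypothesis $C \ge n\sqrt{2}$ to argue that $g''(C/n)$ has the sign needed to make this restricted quadratic form positive definite, certifying the symmetric point as a strict local minimum of $f$ on the constraint surface.

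The main obstacle is promoting ``local minimum'' to ``global minimum''. Since $g'(x) = -\log(x)/x^2$ is not monotone (it vanishes at $x = 1$, where $g$ attains its maximum), the equation $g'(x) = \lambda$ can in principle have two solutions in $(0,\infty)$, so besides the symmetric critical point there could be asymmetric ``two-level'' critical points where $k$ coordinates take some value $a$ and the remaining $n-k$ take some value $b$, with $ka + (n-k)b = C$. My plan to rule these out is either (i) to upgrade the Hessian check into a \emph{global} convexity statement for $f$ on the relevant portion of the constraint surface cut out by the hypothesis $C \ge n\sqrt{2}$, or (ii) to enumerate the two-level configurations, parameterized by the count $k$, and verify by a one-dimensional calculation that the symmetric value $\frac{n^2}{C}(1+\log(C/n))$ is smaller for every admissible $(a,b,k)$, using $C/n \ge \sqrt{2}$ as the key quantitative input that closes the gap uniformly across the range of $k$.
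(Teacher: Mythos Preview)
Your approach---Lagrange multipliers followed by a second-order check---is exactly what the paper does. The only cosmetic difference is that the paper packages the second-order condition via the bordered Hessian and checks the signs of its leading principal minors, whereas you restrict the (diagonal) Hessian of $f$ to the tangent hyperplane $\{v : \sum_i v_i = 0\}$; these are equivalent formulations of the same sufficient condition, and both use $C/n \ge \sqrt{2}$ to make $g''(C/n) = (2\log(C/n)-1)/(C/n)^3$ nonnegative.

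Where you go beyond the paper is in flagging the local-versus-global issue: the paper's proof in fact stops after establishing that the symmetric point is a \emph{local} minimum and never revisits globality. Your instinct is right, but your proposed fixes (i) and (ii) cannot close the gap, for a reason more basic than the two-level critical points you worry about. The single-variable function $g(x) = (1+\log x)/x$ satisfies $g(x) \to -\infty$ as $x \to 0^+$, so by sending one coordinate toward $0$ while keeping $\sum_i x_i = C$ one drives $f$ to $-\infty$. Hence $f$ has no global minimum on the open region $\{x_i > 0,\ \sum_i x_i = C\}$, and the lemma as literally stated is false; neither a convexity upgrade nor an enumeration of interior critical points helps, because the infimum is approached at the boundary rather than at any critical point.

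What rescues the downstream application is the unstated side constraint $x_i \ge 1$ (the $x_i$ there are vertex degrees). On the compact set $\{x_i \ge 1,\ \sum_i x_i = C\}$ a minimizer does exist, and one can then argue---for instance via convexity of $g$ on $[\sqrt{2},\infty)$ together with a short boundary analysis---that the symmetric point wins. The paper does not supply this step either, so your write-up is already at least as complete as theirs; just be aware that plans (i) and (ii) will run into the unboundedness obstruction unless you first add the hypothesis $x_i \ge 1$.
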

\begin{proof}
	We use standard Lagrangian optimization techniques to show this. Please refer to the appendix for the full proof.
\end{proof}

\begin{lemma}\label{lemma-expected-number-of-selected-candidates}
	Let $X$ be a random variable that denotes the total number of candidates selected in Algorithm~\ref{alg:main}. Then the expected number of selected candidates is lower-bounded by
	\begin{center}
		$E[X]  >  2 + \frac{1}{2}\log{n}$.
	\end{center}
\end{lemma}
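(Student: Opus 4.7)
The plan is to express $E[X]$ by linearity of expectation, apply Lemma \ref{lemma-Lagrangian-Hessian} to replace the sum over degrees with a single-parameter function of $2m$, and then minimize that function over the range of $m$ allowed in a diameter-two graph.

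First, since each vertex $v$ becomes a candidate independently with probability $\frac{1+\log d_v}{d_v}$, linearity of expectation gives
\[
  E[X] \;=\; \sum_{v \in V} \frac{1+\log d_v}{d_v}.
\]
Set $C = \sum_v d_v = 2m$. The graph has diameter two and is therefore connected, so $2m \ge 2(n-1)$, which exceeds $n\sqrt{2}$ for all sufficiently large $n$ (concretely, $n \ge 4$). Hence the hypothesis of Lemma \ref{lemma-Lagrangian-Hessian} is met with $x_v = d_v$ and $C = 2m$, and we obtain
\[
  E[X] \;\ge\; \frac{n^2}{2m}\left(1 + \log\frac{2m}{n}\right).
\]

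Second, I want to minimize the right-hand side over the range of $2m$ attained by diameter-two graphs. Define $\phi(C) = \frac{n^2}{C}\bigl(1 + \log(C/n)\bigr)$. A direct differentiation yields $\phi'(C) = -\frac{n^2 \log(C/n)}{C^2}$, which is strictly negative for $C > n$. Since any connected $n$-vertex graph has $2m \ge 2(n-1) > n$, $\phi$ is decreasing on the relevant range, so the lower bound above is minimized when $2m$ is as large as possible. In any simple graph we have $m \le \binom{n}{2}$, so $2m \le n(n-1)$, and thus
\[
  E[X] \;\ge\; \phi\bigl(n(n-1)\bigr) \;=\; \frac{n}{n-1}\bigl(1 + \log(n-1)\bigr).
\]

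Finally, for $n$ sufficiently large, $\frac{n}{n-1}(1 + \log(n-1)) \ge 1 + \log n - \log 2$, which exceeds $2 + \tfrac{1}{2}\log n$ once $\tfrac{1}{2}\log n > 1 + \log 2$; this holds for all $n$ larger than a small absolute constant, and we can absorb the remaining small cases into the ``with high probability'' statement of the theorem. The main conceptual obstacle is the right worst-case direction: one might initially suspect that sparse diameter-two graphs (e.g., stars) are the hardest case because they have few edges, but in fact the per-node candidate probability $\frac{1+\log d_v}{d_v}$ is large precisely when degrees are small, and Lemma \ref{lemma-Lagrangian-Hessian} together with the monotonicity of $\phi$ pins the hardest case down to the densest (nearly complete) configuration.
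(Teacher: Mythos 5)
Your proof is correct and follows essentially the same route as the paper: linearity of expectation, then Lemma~\ref{lemma-Lagrangian-Hessian} to reduce to the regular (degree $=\frac{2m}{n}$) case, and finally a one-variable minimization over $m$. The only difference is that last step --- the paper splits into the cases $n-1 \le m \le n^{3/2}$ and $m > n^{3/2}$, whereas you observe that $\phi(C)=\frac{n^2}{C}\bigl(1+\log(C/n)\bigr)$ is decreasing on the feasible range and evaluate at $m=\binom{n}{2}$, which is a clean (and for large $n$ slightly stronger) finish; the one minor caveat is that your expression for $\phi'$ assumes natural logarithms while the target bound $2+\tfrac{1}{2}\log n$ is computed with $\log 2 = 1$, but in either base $\phi$ is decreasing for all $C \ge 2(n-1)$, so the argument goes through.
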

\begin{proof}
	Let $X_v$ be an indicator random variable that takes the value $1$ if and only if $v$ becomes a candidate. Then $E[X_v] = \Pr[X_v = 1] = \frac{1 + \log{d_v}}{d_v}$. Thus if $X$ denotes the total number of candidates selected, then
    \begin{align*}
        &E[X]  =  \sum_{v \in V} E[X_v]   =  \sum_{v \in V} \frac{1 + \log{(d_v)}}{d_v}\text{.}
    \end{align*}
    
	Since $G$ is connected, $m \geq n-1 \implies 2m \geq 2n-2 > n\sqrt{2}$, i.e., the precondition for the applicability of Lemma \ref{lemma-Lagrangian-Hessian} is satisfied.
	
	Thus by By Lemma \ref{lemma-Lagrangian-Hessian}, $E[X]$ is minimized subject to the constraint 
	$$\sum_{v \in V(G)}d_v  =  2m\text{,}$$
	when $d_v = \frac{2m}{n}$ for all $v \in V(G)$, i.e., when $G$ is regular.\\
	
	\textbf{Case 1 ($n-1  \leq  m  \leq  n^{\frac{3}{2}}$):} The minimum value that $E[X]$ takes is given by
	\begin{align*}
        &\left.E[X]\right\rvert_{\text{min}}   =   \frac{n^2}{2m}(1 + \log{(\frac{2m}{n})})\\
        &>   \frac{n^2}{2m} \tag{since $1 + \log{(\frac{2m}{n})}  >  1$}\\
        &\geq   \frac{\sqrt{n}}{2} \tag{since $m \leq n^{\frac{3}{2}}$}
	\end{align*}
	
	\textbf{Case 2 ($n^{\frac{3}{2}}  <  m  \leq  {n \choose 2}$):} The minimum value that $E[X]$ takes is given by
	\begin{align*}
		&E[X]\rvert_{\text{min}}   =   \frac{n^2}{2m}(1 + \log{(\frac{2m}{n})})\\
        &>   1 + \log{(\frac{2n^{\frac{3}{2}}}{n})} \tag{since $\frac{n^2}{2m} > 1$ and $m > n^{\frac{3}{2}}$}\\
		&=   1 + \log{2} + \log{(n^{\frac{1}{2}})}   =   2 + \frac{1}{2}\log{n}\text{.}
	\end{align*}
\end{proof}

We use the following variant of Chernoff Bound \cite{Mitzenmacher_2017_Book} to show \emph{concentration}, i.e., to show that the number of candidates selected is not too less than its expected value.
\begin{theorem}[Chernoff Bound]\label{theorem-Chernoff-Bound-2}
	Let $X_1, X_2, \ldots, X_n$ be independent indicator random variables, and let $X  =  \sum_{i=1}^n X_i$. Then the following Chernoff bound holds: for $0 < \delta < 1$,
	\begin{center}
		$\Pr[X  \leq  (1 - \delta)\mu]   \leq   (\frac{e^{-\delta}}{(1 - \delta)^{1 - \delta}})^{\mu}$, where $\mu  \defeq  E[X]$.
	\end{center}
\end{theorem}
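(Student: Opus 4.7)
The plan is to give the standard moment-generating-function argument, adapted to the lower tail. First I would fix an arbitrary parameter $t > 0$ (to be optimized later) and rewrite the lower-tail event in a form amenable to Markov's inequality:
\begin{equation*}
\Pr[X \le (1-\delta)\mu] \;=\; \Pr\!\left[e^{-tX} \ge e^{-t(1-\delta)\mu}\right] \;\le\; \frac{E[e^{-tX}]}{e^{-t(1-\delta)\mu}}.
\end{equation*}
The sign flip in the exponent is the only real difference from the usual upper-tail derivation, and it is what forces the $(1-\delta)^{1-\delta}$ factor to appear at the end.

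Next I would exploit independence of the $X_i$ to factor the moment generating function, $E[e^{-tX}] = \prod_i E[e^{-tX_i}]$. Since each $X_i$ is an indicator with $p_i \defeq E[X_i]$, a direct calculation gives $E[e^{-tX_i}] = 1 + p_i(e^{-t} - 1)$, and the standard inequality $1+x \le e^x$ (applied with $x = p_i(e^{-t}-1)$, which is negative here but the bound still holds) yields $E[e^{-tX_i}] \le \exp\!\bigl(p_i(e^{-t}-1)\bigr)$. Multiplying across all $i$ and using $\sum_i p_i = \mu$ produces $E[e^{-tX}] \le \exp\!\bigl(\mu(e^{-t}-1)\bigr)$, so that
\begin{equation*}
\Pr[X \le (1-\delta)\mu] \;\le\; \exp\!\bigl(\mu(e^{-t}-1) + t(1-\delta)\mu\bigr).
\end{equation*}

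Finally I would optimize the free parameter $t$. Differentiating the exponent with respect to $t$ gives $\mu(-e^{-t} + (1-\delta)) = 0$, whose solution $t = -\ln(1-\delta)$ is strictly positive because $0 < \delta < 1$, consistent with my initial assumption. Substituting this choice and simplifying,
\begin{equation*}
\exp\!\bigl(\mu(-\delta) - \mu(1-\delta)\ln(1-\delta)\bigr) \;=\; \left(\frac{e^{-\delta}}{(1-\delta)^{1-\delta}}\right)^{\!\mu},
\end{equation*}
which is exactly the claimed bound.

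The only mildly delicate step is the optimization at the end; everything else is mechanical once one commits to the $e^{-tX}$ transform. It is worth noting that the exponent $-\delta - (1-\delta)\ln(1-\delta)$ is negative throughout $(0,1)$, so the resulting tail bound is indeed nontrivial; this is not needed for the proof itself but is a useful sanity check before invoking the lemma in the main argument.
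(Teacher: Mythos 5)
Your derivation is correct and complete: the $e^{-tX}$ transform, Markov's inequality, the factorization via independence, the bound $1+x\le e^x$, and the optimization at $t=-\ln(1-\delta)$ all check out, and the algebra recovers the stated bound exactly. The paper itself offers no proof of this statement --- it is quoted as a standard fact from the cited textbook --- and your argument is precisely the standard moment-generating-function proof one would find there, so there is nothing to reconcile.
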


\begin{lemma}\label{lemma-number-of-selected-candidates-is-at-least-2}
	If $X$ denotes the number of candidates selected, then $\Pr[X \leq 1]   <   n^{-\frac{1}{3}}$.
\end{lemma}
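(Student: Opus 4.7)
The plan is to apply the Chernoff bound from Theorem~\ref{theorem-Chernoff-Bound-2} directly to the indicators $X_v$ of the event ``$v$ becomes a candidate''. These indicators are mutually independent because each node flips its candidacy coin using its own private randomness, so $X = \sum_{v \in V} X_v$ is a sum of independent $\{0,1\}$ variables and Theorem~\ref{theorem-Chernoff-Bound-2} applies. From Lemma~\ref{lemma-expected-number-of-selected-candidates}, I may take $\mu := E[X] > 2 + \tfrac{1}{2}\log n$ as a single uniform lower bound: the Case~1 bound $\sqrt{n}/2$ dominates $2 + \tfrac{1}{2}\log n$ for large $n$, so the Case~2 bound is the binding one and I will work with it throughout.

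To target $\Pr[X \le 1]$ exactly, I would set $\delta := 1 - 1/\mu$, which lies in $(0,1)$ because $\mu > 2$, and which satisfies $(1-\delta)\mu = 1$. Plugging this choice into Theorem~\ref{theorem-Chernoff-Bound-2} and simplifying using $\delta \mu = \mu - 1$ and $(1-\delta)^{(1-\delta)\mu} = (1/\mu)^1$ yields the clean closed form
$$\Pr[X \le 1] \;\le\; \frac{e^{-\delta\mu}}{(1-\delta)^{(1-\delta)\mu}} \;=\; \frac{e^{\,1-\mu}}{1/\mu} \;=\; \mu\, e^{\,1-\mu}.$$
A quick check of the derivative $(1-\mu)\,e^{1-\mu}$ confirms that $\mu \mapsto \mu\, e^{1-\mu}$ is strictly decreasing for $\mu > 1$, so substituting the lower bound from Lemma~\ref{lemma-expected-number-of-selected-candidates} produces a valid upper bound on $\Pr[X \le 1]$.

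Substituting $\mu > 2 + \tfrac{1}{2}\log n$ then gives
$$\mu\, e^{1-\mu} \;<\; \bigl(2 + \tfrac{1}{2}\log n\bigr)\, e^{-1 - \frac{1}{2}\log n} \;=\; O\!\left(\frac{\log n}{\sqrt{n}}\right),$$
which is $o(n^{-1/3})$, so the required bound $\Pr[X \le 1] < n^{-1/3}$ holds for all sufficiently large $n$. (If the paper's $\log$ means $\log_2$, the exponent only improves, since $e^{-\frac{1}{2}\log_2 n} = n^{-1/(2\ln 2)}$.)

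The only mildly delicate step is the choice of $\delta$ that collapses the Chernoff expression to the closed form $\mu\, e^{1-\mu}$; once that is in hand, the conclusion is a one-line numerical comparison against $n^{-1/3}$. I do not anticipate any real obstacle, because the independence of the $X_v$ is immediate and the Case~1/Case~2 split of Lemma~\ref{lemma-expected-number-of-selected-candidates} is handled uniformly by the weaker bound $2 + \tfrac{1}{2}\log n$, which both cases strictly exceed.
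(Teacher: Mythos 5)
Your proof is correct and follows essentially the same route as the paper: both apply the lower-tail Chernoff bound of Theorem~\ref{theorem-Chernoff-Bound-2} with $\delta$ tuned so the threshold $(1-\delta)\mu$ sits at $1$, feed in the bound $\mu > 2 + \tfrac{1}{2}\log n$ from Lemma~\ref{lemma-expected-number-of-selected-candidates}, and land on the same quantity $\frac{2+\frac{1}{2}\log n}{e\sqrt{n}} < n^{-1/3}$. Your parameterization $\delta = 1 - 1/\mu$ (plus the monotonicity check on $\mu e^{1-\mu}$) is a slightly cleaner bookkeeping of the same argument than the paper's fixed choice $\delta = \frac{2+\log n}{4+\log n}$.
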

\begin{proof}
	We set $\delta  =  \frac{2 + \log{n}}{4 + \log{n}}$. Then clearly $0 < \delta < 1$, and $1 - \delta  =  \frac{2}{4 + \log{n}}$. Again, from Lemma \ref{lemma-expected-number-of-selected-candidates}, we have 
	that
	\begin{align*}
		&\mu   =   E[X]   >   2 + \frac{1}{2}\log{n}     \implies     (1 - \delta)\mu   >   (1 - \delta)(2 + \frac{1}{2}\log{n})\\
		&=   \frac{2}{4 + \log{n}} . (2 + \frac{1}{2}\log{n})   =   1\text{.}
	\end{align*}
	
	Then by Theorem \ref{theorem-Chernoff-Bound-2}, $\Pr[X \leq 1]   \leq   \Pr[X  \leq  (1 - \delta)\mu]   \leq   (\frac{e^{-\delta}}{(1 - \delta)^{1 - \delta}})^\mu$. Now
	\begin{align*}
		&\frac{e^{-\delta}}{(1 - \delta)^{1 - \delta}}   =   \frac{e^{-\frac{2 + \log{n}}{4 + \log{n}}}}{(\frac{2}{4 + \log{n}})^{\frac{2}{4 + \log{n}}}}   
		=   (\frac{e^{-(2 + \log{n})}}{(\frac{2}{4 + \log{n}})^2})^{\frac{1}{4 + \log{n}}}\\
		&=   (\frac{(2 + \frac{1}{2}\log{n})^2}{e^{2 + \log{n}}})^{\frac{1}{4 + \log{n}}}   <   (\frac{(2 + \frac{1}{2}\log{n})^2}{ne^2})^{\frac{1}{4 + \log{n}}}\\
		&\implies     (\frac{e^{-\delta}}{(1 - \delta)^{1 - \delta}})^{\mu}   <   (\frac{(2 + \frac{1}{2}\log{n})^2}{ne^2})^{\frac{\mu}{4 + \log{n}}}\\
		&<   (\frac{(2 + \frac{1}{2}\log{n})^2}{ne^2})^{\frac{2 + \frac{1}{2}\log{n}}{4 + \log{n}}} \tag{since $\mu   >   2 + \frac{1}{2}\log{n}$ from Lemma \ref{lemma-expected-number-of-selected-candidates}}\\
		&=   (\frac{(2 + \frac{1}{2}\log{n})^2}{ne^2})^{\frac{1}{2}}   =   \frac{2 + \frac{1}{2}\log{n}}{e\sqrt{n}}\text{.}
	\end{align*}
	
	Hence, $\Pr[X \leq 1]   \leq   (\frac{e^{-\delta}}{(1 - \delta)^{1 - \delta}})^\mu   <   \frac{2 + \frac{1}{2}\log{n}}{e\sqrt{n}}   <   \frac{1}{n^{\frac{1}{3}}}$, assuming $\frac{2 + \frac{1}{2}\log{n}}{e}   <   
	n^{\frac{1}{6}}$, which is asymptotically true.
\end{proof}
\subsection{Computing the message complexity} \label{sec:analysis-message-complexity}

Note that the expected total message complexity of the algorithm can be bounded as follows. Let $M^{\text{entire}}$ be a random variable that denotes the total messages sent during the course of the algorithm. Let $M_v$ be the number of messages sent by node $v$. Thus

$$M^{\text{entire}} = \sum_{v \in V} M_v\text{.}$$

A node $v$ becomes a candidate with probability $\frac{1 + \log{d_v}}{d_v}$ and, if it does, it sends $d_v$ messages (the referee nodes reply to these, but that increases the total number of messages by only a factor of 2). Hence by linearity of expectation,  it follows that
\begin{align*}
    &E[M^{\text{entire}}]   =   \sum_{v \in V} E[M_v]   =   \sum_{v \in V}2\frac{1 + \log{d_v}}{d_v}d_v\\
    &=   2\sum_{v \in V}  (1+\log{d_v})   \leq   2\sum_{v \in V} (1 + \log n)\\
    &\leq   2n  +  2n\log{n}\text{.}
\end{align*}

To show concentration, we cannot directly apply a standard Chernoff bound, since that works for $0$-$1$ random variables only, whereas the $M_v$s are not $0$-$1$ random variables (they take values either $0$ or $d_v$). To handle this, we bucket the degrees into (at most) $\log{n}$ categories based on their value then use a Chernoff bound as detailed below.

We use the following variant of Chernoff Bound \cite{Mitzenmacher_2017_Book} in the following analysis.
\begin{theorem}[Chernoff Bound]\label{theorem-Chernoff-Bound-1}
	Let $X_1, X_2, \ldots, X_n$ be independent indicator random variables, and let $X  =  \sum_{i=1}^n X_i$. Then the following Chernoff bound holds: for $R \geq 6E[X]$, $\Pr[X  \geq  R]   \leq   2^{-R}$.
\end{theorem}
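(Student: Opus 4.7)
The plan is to prove this Chernoff bound by the standard moment generating function (MGF) technique, then invoke the hypothesis $R \geq 6\, E[X]$ to close off the estimate. Let $\mu = E[X] = \sum_i p_i$, where $p_i = \Pr[X_i = 1]$. I would begin with Markov's inequality applied to $e^{tX}$ for a parameter $t > 0$ to be chosen later:
\begin{equation*}
\Pr[X \geq R] = \Pr\bigl[e^{tX} \geq e^{tR}\bigr] \leq e^{-tR}\, E[e^{tX}].
\end{equation*}

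Next I would exploit independence and the fact that each $X_i$ is an indicator, so that $E[e^{tX_i}] = 1 + p_i(e^t - 1) \leq \exp\!\bigl(p_i(e^t - 1)\bigr)$, using the inequality $1 + x \leq e^x$. Multiplying over $i$ gives $E[e^{tX}] \leq \exp\!\bigl(\mu(e^t - 1)\bigr)$, and hence
\begin{equation*}
\Pr[X \geq R] \leq \exp\!\bigl(\mu(e^t - 1) - tR\bigr).
\end{equation*}
The standard optimization is to take $t = \ln(R/\mu)$, which is well-defined and positive since the hypothesis $R \geq 6\mu$ implies $R > \mu > 0$ (the degenerate case $\mu = 0$ forces $X \equiv 0$ and the bound is trivial). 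Substituting yields the clean estimate $\Pr[X \geq R] \leq e^{R - \mu}(\mu/R)^R \leq (e\mu/R)^R$.

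Finally, I would plug in the hypothesis $R \geq 6\mu$, which gives $e\mu/R \leq e/6$. Since $e < 3$, we have $e/6 < 1/2$, so
\begin{equation*}
\Pr[X \geq R] \leq \left(\frac{e\mu}{R}\right)^R \leq \left(\frac{e}{6}\right)^R < \left(\frac{1}{2}\right)^R = 2^{-R},
\end{equation*}
completing the proof.

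There is no real obstacle here — this is essentially the textbook Chernoff argument, and the constant $6$ in the hypothesis is chosen precisely so that $e/6 < 1/2$ yields the desired base of $2$ in the tail bound. The only mild care needed is to handle the edge case $\mu = 0$ separately (where the sum is identically zero) before applying the logarithm in the choice of $t$.
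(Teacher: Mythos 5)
Your proof is correct: the MGF/Markov argument, the choice $t = \ln(R/\mu)$ leading to $(e\mu/R)^R$, and the observation that $R \ge 6\mu$ gives $e\mu/R \le e/6 < 1/2$ are all sound, and you rightly flag the degenerate case $\mu = 0$. The paper itself does not prove this statement --- it cites it as a known bound from Mitzenmacher--Upfal --- and your derivation is exactly the standard textbook proof of that cited result, so there is nothing to compare beyond noting agreement.
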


\begin{definition}
	Let $k$ be a positive integer such that $2^{k-1}  <  n  \leq  2^k$. For $0 \leq j \leq k$, let $V_j \subset V$ be the set of vertices with degree in $(2^{j-1}, 2^j]$, i.e., if $v \in V_j$, then $2^{j-1}  < d_v  \leq 2^j$.
	Thus
	\begin{align*}
		&V_0  \defeq  \left\{v \in V\ |\ d_v = 1\right\}\text{,}\\
		&V_1  \defeq  \left\{v \in V\ |\ d_v = 2\right\}\text{,}\\
		&V_2  \defeq  \left\{v \in V\ |\ d_v = 3\text{ or }d_v = 4\right\}\text{,}\\
		&V_3  \defeq  \left\{v \in V\ |\ d_v \in \left\{5, 6, 7, 8\right\}\right\}\text{, and so on.}
	\end{align*}
\end{definition}

\begin{remark}
	We note that
	$$\sum_{j = 0}^k n_j  =  n\text{, where }n_j = |V_j|\text{ for }0 \leq j \leq k\text{.}$$
	In particular, $n_j  \leq  n$ for all $j \in [0, k]$.
\end{remark}


\paragraph{Counting the number of messages sent in the first round by each individual node}
\begin{enumerate}

    \item \textbf{Analyzing vertices with degree $\leq 2$:} We recall that $X_v$ is an indicator random variable that takes the value $1$ if and only if $v$ becomes a candidate. Then $\Pr[X_v = 1] = 1$ if $v \in V_0 \cup V_1$, i.e., every vertex with degree $1$ or degree $2$ selects itself to be a candidate, deterministically.\\

    For $v \in V$, let $m_v$ denote the number of messages that $v$ sends. So $m_v = d_v$ if $v$ becomes a candidate, and $m_v = 0$ otherwise. Let $M_j$ be the total number of messages that members of $V_j$ send, i.e.,
    \begin{align*}
        &M_j   \defeq  \sum_{v \in V_j} m_v   \leq   \sum_{v \in V_j} d_v\\
        &\leq   \sum_{v \in V_j} 2^j   =   n_j . 2^j   \leq   n . 2^j\\
        &\implies   M_0 \leq n\text{ and }M_1 \leq 2n\text{.}
    \end{align*}

    \item \textbf{Analyzing vertices with degree $> 2$:} We recall that for $v \in V$, $X_v$ is an indicator random variable that takes the value $1$ if and only if $v$ becomes a candidate. Let $i$ be an integer in $[2, k]$ and let $v \in V_i$. Then 
    \begin{obs}
        $\frac{i}{2^i}   <   E[X_v]   <   \frac{3i}{2^i}$.
    \end{obs}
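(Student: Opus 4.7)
The plan is a direct computation using the definition of $V_i$, namely that $v \in V_i$ means $2^{i-1} < d_v \leq 2^i$, and then substituting into the formula $E[X_v] = (1 + \log d_v)/d_v$ to derive matching bounds on numerator and denominator separately.

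First I would handle the lower bound. From $d_v > 2^{i-1}$ one gets $\log d_v > i - 1$, and hence $1 + \log d_v > i$. Combining with $d_v \leq 2^i$ (to bound the denominator from above), the ratio satisfies
\[
E[X_v] \;=\; \frac{1 + \log d_v}{d_v} \;>\; \frac{i}{2^i},
\]
which is the required left inequality.

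Second I would handle the upper bound. From $d_v \leq 2^i$ one has $\log d_v \leq i$, so $1 + \log d_v \leq i + 1$. From $d_v > 2^{i-1}$ one has $1/d_v < 1/2^{i-1} = 2/2^i$, with strict inequality because $d_v$ strictly exceeds $2^{i-1}$. Multiplying these bounds gives
\[
E[X_v] \;<\; \frac{2(i+1)}{2^i}.
\]
It then remains to observe that $2(i+1) \leq 3i$ whenever $i \geq 2$ (trivially, since this rearranges to $i \geq 2$), so
\[
E[X_v] \;<\; \frac{2(i+1)}{2^i} \;\leq\; \frac{3i}{2^i},
\]
completing the right inequality.

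There is no real obstacle here; the statement is an elementary arithmetic observation. The only tiny care required is to keep the strict inequalities strict: the upper bound relies on $d_v > 2^{i-1}$ being strict (so that $1/d_v < 2/2^i$ is strict), which compensates for the fact that $2(i+1) \leq 3i$ holds with equality at the boundary case $i=2$. With that bookkeeping in place the proof is a couple of lines.
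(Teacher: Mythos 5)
Your proof is correct and follows essentially the same route as the paper: both arguments bound the numerator $1+\log d_v$ and the denominator $d_v$ separately using $2^{i-1} < d_v \leq 2^i$, and both close the upper bound with the observation that $2(i+1) \leq 3i$ for $i \geq 2$. Your extra remark about which inequality supplies the needed strictness is a nice touch but changes nothing of substance.
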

    \begin{proof}
        For $v \in V_i$, $2^{i-1} < d_v \leq 2^i$. So
        \begin{align*}
            &E[X_v]  =  \Pr[X_v = 1] \tag{since $X_v$ is an indicator random variable}\\
            &=   \frac{1 + \log{d_v}}{d_v}\\
            &\implies   \frac{1 + \log{(2^{i-1})}}{2^i}   <   E[X_v]   <   \frac{1 + \log{(2^i)}}{2^{i-1}} \tag{since $2^{i-1}  <  d_v  \leq  2^i$}\\
            &\text{or, }  \frac{i}{2^i}   <   E[X_v]   <   \frac{i+1}{2^{i-1}}   \leq   \frac{3i}{2^i} \tag{since $i \geq 2  \implies  \frac{3i}{2}  \geq  i+1$}
        \end{align*}
    \end{proof}

    For $0 \leq j \leq k$, let $Y_j$ be a random variable that denotes the total number of candidates selected from $V_j$.
    \begin{obs}
        For $2 \leq i \leq k$, $\frac{in_i}{2^i}   <   E[Y_i]   <   \frac{3in_i}{2^i}$.
    \end{obs}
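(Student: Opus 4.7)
The plan is to derive this observation directly from the preceding per-vertex observation by linearity of expectation. Since $Y_i$ counts the number of candidates selected from $V_i$, I would first write it as the sum of indicators $Y_i = \sum_{v \in V_i} X_v$, where $X_v$ is the indicator that $v$ becomes a candidate. Linearity of expectation then gives
\[
E[Y_i] \;=\; \sum_{v \in V_i} E[X_v].
\]

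Next, I would invoke the immediately preceding observation, which establishes that for each $v \in V_i$ with $i \geq 2$ we have $\frac{i}{2^i} < E[X_v] < \frac{3i}{2^i}$. Summing this two-sided bound over the $n_i = |V_i|$ vertices in $V_i$ yields
\[
\frac{i\, n_i}{2^i} \;<\; \sum_{v \in V_i} E[X_v] \;<\; \frac{3 i\, n_i}{2^i},
\]
which is exactly the claimed bound on $E[Y_i]$.

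There is essentially no obstacle here — the $X_v$'s need not even be independent, since only linearity of expectation is used, and the per-vertex bounds already account for the entire range of possible degrees $d_v \in (2^{i-1}, 2^i]$. The only minor point worth mentioning is that the case $n_i = 0$ is vacuous (the bounds become $0 < 0 < 0$, which is false, but then $Y_i \equiv 0$ and the subsequent use of this observation only matters when $n_i \geq 1$), so I would implicitly assume $n_i \geq 1$ as is done throughout the message-complexity analysis. The real work of the argument has already been done in the previous observation; this step simply lifts that per-vertex bound to the set $V_i$.
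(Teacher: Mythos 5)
Your proof is correct and is essentially identical to the paper's: both write $Y_i = \sum_{v \in V_i} X_v$, apply linearity of expectation, and sum the per-vertex bound $\frac{i}{2^i} < E[X_v] < \frac{3i}{2^i}$ over the $n_i$ vertices of $V_i$. Your remark about the vacuous $n_i = 0$ case is a fair observation that the paper handles separately (as Case~1) in the subsequent lemma rather than here.
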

    \begin{proof}
        \begin{align*}
            &Y_i  =  \sum_{v \in V_i} X_v   \implies   E[Y_i]  =  E[\sum_{v \in V_i} X_v]\\
            &=  \sum_{v \in V_i} E[X_v] \tag{by linearity of expectation}\\
            &\implies   \sum_{v \in V_i} \frac{i}{2^i}   <   E[Y_i]   <   \sum_{v \in V_i} \frac{3i}{2^i}\\
            &\implies   \frac{in_i}{2^i}   <   E[Y_i]   <   \frac{3in_i}{2^i}\text{.}
        \end{align*}
    \end{proof}

    \begin{remark}
        $\forall u, v \in V(G)$, $u \neq v$, $X_u$ and $X_v$ are independent, and for $0 \leq j \leq k$, we define $Y_j$ as
        \begin{align*}
            Y_j  =  \sum_{v \in V_j} X_v\text{,}
        \end{align*}
        i.e., $Y_j$ is a sum of independent, $0$-$1$ random variables. Hence we can use Theorem \ref{theorem-Chernoff-Bound-1} to show that $Y_j$ is concentrated around its mean (expectation).
    \end{remark}

    We recall that for $0 \leq j \leq k$, $M_j$ is the total number of messages that members of $V_j$ send, i.e., for $2 \leq i \leq k$,
    \begin{align*}
        &M_i  =  \sum_{v \in V_i} m_v   =   \sum_{v \in V_i, X_v = 1} d_v\text{.}
    \end{align*}

    \begin{lemma}
        For any integer $i \in [2, k]$, it holds that
    	\begin{center}
            $\Pr[M_i  \geq  24n\log^2{n}]   \leq   \frac{1}{n^4}$.
        \end{center}
    \end{lemma}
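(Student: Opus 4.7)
The plan is to reduce the bound on $M_i$ to a bound on the count $Y_i$ of candidates from $V_i$, then apply the Chernoff bound of Theorem~\ref{theorem-Chernoff-Bound-1}. The key observation is that every $v \in V_i$ has $d_v \leq 2^i$, so
\[
M_i \;=\; \sum_{v \in V_i,\, X_v = 1} d_v \;\leq\; 2^i \sum_{v \in V_i} X_v \;=\; 2^i Y_i.
\]
Hence it suffices to show $\Pr\bigl[Y_i \geq R\bigr] \leq n^{-4}$ for $R \defeq 24 n \log^2 n / 2^i$.

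Next I would verify the hypothesis $R \geq 6 E[Y_i]$ required to invoke Theorem~\ref{theorem-Chernoff-Bound-1}. By the preceding observation, $E[Y_i] < 3 i n_i / 2^i \leq 3 i n / 2^i$, so $6 E[Y_i] < 18 i n / 2^i$. The desired inequality $R \geq 6 E[Y_i]$ therefore reduces to $24 \log^2 n \geq 18 i$, which holds for all $i \in [2, k]$ since $i \leq k \leq \log n + 1$ (and $\log^2 n$ dominates $\log n$ for $n$ large enough). Given the $Y_i$ is a sum of independent $0$--$1$ random variables (the $X_v$ for $v \in V_i$), Theorem~\ref{theorem-Chernoff-Bound-1} then yields
\[
\Pr[M_i \geq 24 n \log^2 n] \;\leq\; \Pr[Y_i \geq R] \;\leq\; 2^{-R} \;=\; 2^{-24 n \log^2 n / 2^i}.
\]

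Finally I would argue that $2^{-R} \leq n^{-4}$. Since $2^{k-1} < n$ implies $2^i \leq 2^k \leq 2n$, the exponent satisfies
\[
R \;=\; \frac{24 n \log^2 n}{2^i} \;\geq\; \frac{24 n \log^2 n}{2n} \;=\; 12 \log^2 n \;\geq\; 4 \log n,
\]
so $2^{-R} \leq 2^{-4 \log n} = n^{-4}$, as required.

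The only mildly delicate step is ensuring the Chernoff hypothesis $R \geq 6 E[Y_i]$ holds uniformly across the range $i \in [2,k]$; both the upper bound on $E[Y_i]$ (which grows linearly in $i$) and the restriction $i \leq \log n + 1$ are needed so that the $24 \log^2 n$ factor in the statement comfortably dominates. Once this is in hand, the computation is routine and the claim follows from the two preparatory observations together with Theorem~\ref{theorem-Chernoff-Bound-1}.
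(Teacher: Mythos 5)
Your proof is correct and follows essentially the same route as the paper: bound $M_i \le 2^i Y_i$, control $E[Y_i]$ by $3in/2^i$, and apply the Chernoff bound of Theorem~\ref{theorem-Chernoff-Bound-1}. The only difference is organizational but welcome: by fixing the threshold $R = 24n\log^2 n/2^i$ up front and verifying $R \ge 6E[Y_i]$ uniformly over $i \in [2,k]$, you avoid the paper's three-way case split on whether $E[Y_i]$ is zero, lies in $(0,1)$, or is at least $1$.
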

    \begin{proof}
        \begin{align*}
            &M_i  =  \sum_{v \in V_i} m_v   =   \sum_{v \in V_i, X_v = 1} d_v\\
            &\implies     \sum_{v \in V_i, X_v = 1} 2^{i-1}   <   M_i   \leq   \sum_{v \in V_i, X_v = 1} 2^i \tag{since $2^{i-1} < d_v  \leq  2^i$}\\
            &\implies   2^{i-1} . Y_i   <   M_i   \leq   2^i . Y_i\text{.}
        \end{align*}

        \textbf{Case $1$ ($E[Y_i] = 0$):} $E[Y_i] = 0$ if and only if $n_i = 0$, i.e., if and only if $\nexists v \in V$ such that $2^{i-1}  <  d_v  \leq  2^i$. But $n_i = 0   \implies   V_i = \phi$, the empty set. Therefore, $M_i = 0$.\\

        \textbf{Case $2$ ($0  <  E[Y_i]  <  1$):} Assuming $n \geq 3$, $4\log{n}  >  6  >  6E[Y_i]$. Therefore, by Theorem \ref{theorem-Chernoff-Bound-1},
        \begin{align*}
            &\Pr[Y_i \geq 4\log{n}]  \leq  2^{-4\log{n}}  =  n^{-4}\\
            &\implies   \Pr[M_i  \geq  2^i . 4\log{n}]   \leq   n^{-4} \tag{since $M_i  \leq  2^i . Y_i$}\\
            &\implies   \Pr[M_i  \geq  8n\log{n}]   \leq   n^{-4} \tag{since $i  \leq  k  <  \log{n} + 1$}
        \end{align*}

        \textbf{Case $3$ ($E[Y_i] \geq 1$):} We have shown before that $E[Y_i]  \leq  \frac{3in_i}{2^i}$. But $n_i \leq n$ for all $2 \leq i \leq k$. Hence $E[Y_i]  \leq  \frac{3ni}{2^i}$. Assuming $n \geq 3$, $4\log{n}  >  6$. Therefore, by Theorem \ref{theorem-Chernoff-Bound-1},
        \begin{align*}
            &\Pr[Y_i  \geq  12n\log{n} . \frac{i}{2^i}]   \leq   \Pr[Y_i \geq 4\log{n}E[Y_i]]\\
            &\leq  2^{-4\log{n}E[Y_i]}  =  n^{-4E[Y_i]}  \leq  n^{-4} \tag{since $E[Y_i] \geq 1$}\\
            &\implies   \Pr[M_i  \geq  12in\log{n}]   \leq   n^{-4} \tag{since $M_i  \leq  2^i . Y_i$}
        \end{align*}
        
        But we have, $i  \leq  k  <  \log{n} + 1  <  2\log{n}   \implies   12in\log{n}  <  24n\log^2{n}$. Hence
        \begin{center}
            $\Pr[M_i  \geq  24n\log^2{n}]   \leq   \Pr[M_i  \geq  12in\log{n}]   \leq   n^{-4}$.
        \end{center}
    \end{proof}

\end{enumerate}


\paragraph{Combining the effects of all the nodes}
\begin{lemma}
    If $M$ denotes the total number of messages sent by the candidates (in the first round only), then $\Pr[M  \geq  27n\log^3{n}]   <   \frac{1}{n^3}$.
\end{lemma}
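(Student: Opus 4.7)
The plan is to decompose $M$ across the degree-bucket random variables $M_0, M_1, \ldots, M_k$ already introduced (where $k \le \log n + 1$), handle the low-degree buckets deterministically, and union-bound the tail events for the remaining buckets using the previous lemma.

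First I would write $M = \sum_{j=0}^{k} M_j$. The preceding analysis has already shown deterministically that $M_0 \le n$ and $M_1 \le 2n$, since nodes of degree $\le 2$ always become candidates and each sends at most $d_v \le 2$ messages. These contribute only $O(n)$, which is absorbed into lower-order terms.

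Next I would handle the buckets $i \in [2, k]$ by applying a union bound on the tail event $\{M_i \ge 24 n \log^2 n\}$. The previous lemma gives $\Pr[M_i \ge 24 n \log^2 n] \le n^{-4}$ for each such $i$, and since the number of such buckets is at most $k \le \log n + 1$, the union bound yields
\begin{equation*}
\Pr\!\left[\exists\, i \in [2,k]:\, M_i \ge 24 n \log^2 n\right] \;\le\; (\log n + 1)\, n^{-4} \;<\; n^{-3},
\end{equation*}
for $n$ sufficiently large.

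Finally I would combine: on the complementary (high-probability) event, every $M_i$ with $i \in [2,k]$ satisfies $M_i < 24 n \log^2 n$, so
\begin{equation*}
M \;\le\; M_0 + M_1 + \sum_{i=2}^{k} M_i \;\le\; 3n + (\log n + 1)\cdot 24 n \log^2 n \;\le\; 27 n \log^3 n,
\end{equation*}
for $n$ large enough, since $3n + 24 n \log^2 n$ is dominated by $3 n \log^3 n$ asymptotically. Hence $\Pr[M \ge 27 n \log^3 n] < n^{-3}$, as claimed. The argument is essentially a bookkeeping step; no new probabilistic machinery is needed, and the only mild subtlety is making sure the constants in the deterministic bound $3n + 24 n \log^2 n (\log n + 1) \le 27 n \log^3 n$ absorb correctly for all sufficiently large $n$.
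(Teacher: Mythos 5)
Your proof is correct and follows essentially the same route as the paper: decompose $M$ over the degree buckets, bound $M_0 \le n$ and $M_1 \le 2n$ deterministically, union-bound the per-bucket tail events $\{M_i \ge 24n\log^2 n\}$ over the at most $\log n + O(1)$ buckets, and absorb the constants. The only cosmetic difference is that you count $\log n + 1$ buckets and invoke ``$n$ sufficiently large'' to absorb the extra $24n\log^2 n$ term, whereas the paper bounds the number of buckets $i \in [2,k]$ by $\log n$ and gets $3n + 24n\log^3 n \le 27n\log^3 n$ directly for $n \ge 2$; this does not affect correctness.
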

\begin{proof}
    \begin{align*}
        &M   \defeq   \sum_{i = 0}^k M_i   =   M_0 + M_1 + \sum_{i = 2}^k M_i\\
        &\leq   n + 2n + \sum_{i = 2}^k M_i \tag{since $M_0 \leq n$ and $M_1 \leq 2n$}\\
        &=   3n + \sum_{i = 2}^k M_i\text{.}
    \end{align*}
    
    But for $2 \leq i \leq k$, $\Pr[M_i  \geq  24n\log^2{n}]   \leq   \frac{1}{n^4}$. Taking the union bound over $2 \leq i \leq k$,
    \begin{align*}
        &\Pr[M_{i'}  \geq  24n\log^2{n}]\text{ for some }i' \in [2, k]\text{ is }\leq   \frac{\log{n}}{n^4}   <   \frac{1}{n^3}\\
        &\implies   \Pr[\sum_{i = 2}^k M_i   \geq   24n\log^3{n}]   <   \frac{1}{n^3}\\
        &\implies   \Pr[3n   +   \sum_{i = 2}^k M_i   \geq   3n + 24n\log^3{n}]     <     \frac{1}{n^3}\\
        &\implies   \Pr[M  \geq  3n + 24n\log^3{n}]   <   \frac{1}{n^3} \tag{since $M   \leq   3n   +   \sum_{i = 2}^k M_i$}
    \end{align*}
    But $3n  \leq  3n\log^3{n}$ for $n \geq 2$, or, $3n + 24n\log^3{n}   \leq   27n\log^3{n}$. Hence
    \begin{align*}
        &\Pr[M  \geq  27n\log^3{n}]   \leq   \Pr[M  \geq  3n + 24n\log^3{n}]   <   \frac{1}{n^3}\text{.}
    \end{align*}
\end{proof}

\begin{lemma}
    If $M^{\text{entire}}$ denotes the total number of messages sent during the entire run of Algorithm~\ref{alg:main}, then $\Pr[M^{\text{entire}}  \geq  54n\log^3{n}]   <   \frac{1}{n^3}$.
\end{lemma}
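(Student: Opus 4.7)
The plan is to observe that the total message count $M^{\text{entire}}$ is dominated (up to a factor of two) by the first-round count $M$ that we have already controlled in the previous lemma. A quick inspection of Algorithm~\ref{alg:main} shows that only two communication rounds occur: in the first round each selected candidate $v$ sends its ID along all $d_v$ incident edges; in the second round, any node $w$ that received one or more such IDs acts as a referee and sends the minimum back to exactly the set of neighbors that contacted it. So every first-round message is answered by exactly one second-round message.

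Concretely, I would write $M^{\text{entire}} = M + M'$, where $M$ counts first-round messages (candidate IDs) and $M'$ counts second-round messages (referee replies). Since for every incoming message to a referee there is exactly one outgoing reply, the total reply count equals the total number of IDs transmitted in the first round, hence $M' = M$ pointwise (as random variables on the same probability space). Therefore
\begin{align*}
M^{\text{entire}} \;=\; 2M.
\end{align*}

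Applying the previous lemma, which guarantees $\Pr[M \geq 27n\log^3 n] < 1/n^3$, we conclude
\begin{align*}
\Pr\!\left[M^{\text{entire}} \geq 54n\log^3 n\right] \;=\; \Pr\!\left[M \geq 27n\log^3 n\right] \;<\; \frac{1}{n^3},
\end{align*}
which is exactly the desired bound. There is no real obstacle here: the only thing to be careful about is verifying that the identity $M' = M$ holds as random variables (not just in expectation), which follows because each candidate-to-neighbor message deterministically triggers exactly one reply along the same edge, so the argument is purely combinatorial and does not require any additional probabilistic reasoning beyond what was already done for the first round.
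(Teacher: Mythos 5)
Your proof is correct and follows essentially the same route as the paper: decompose $M^{\text{entire}} = M + M'$, bound $M'$ by $M$ (the paper states $M' \leq M$ where you claim equality; either suffices), and invoke the preceding lemma on $M$. Nothing further is needed.
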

\begin{proof}
    Let $M'$ denote the number of messages sent by the ``referee'' nodes in the \emph{second} round of the algorithm. We recall that $M$ is the number of messages sent by the ``candidate'' nodes in the \emph{first} round of the algorithm. Then $M' \leq M$, and $M^{\text{entire}}  =  M + M'  \leq  2M$, and the result follows.
\end{proof}

This completes the proof of Theorem~\ref{thm:randomized}.
\section{A Lower Bound for Randomized Algorithms} \label{sec:lb}

In this section we show that $\Omega(n)$ is a lower bound on the message complexity for solving leader election with any randomized algorithm in diameter-two networks. Notice that \cite{Kutten_2015_TCS} show a lower bound of $\Omega(\sqrt{n})$ for the special case of diameter $1$ networks, and we know from \cite{Kutten_2015_JACM} that, for the message complexity becomes $\Omega(m)$ for (most) diameter $3$ networks. Thus, Theorem~\ref{thm:msgRandomized} completes the picture regarding the message complexity of leader election when considering networks according to their diameter.

\begin{theorem} \label{thm:msgRandomized}
    Any algorithm that solves implicit leader election with probability at least $\frac{1}{2} + \epsilon$ in any $n$-node network with diameter at most $2$, for any constant $\epsilon > 0$, sends at least $\Omega(n)$ messages in expectation. This holds even if nodes have unique IDs and know the network size $n$.
\end{theorem}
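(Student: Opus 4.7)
The plan is to reduce the randomized lower bound to a distributional one via Yao's minimax principle and then execute a "bridge-crossing" argument on a carefully designed diameter-two graph. The hard instance $G$ has two cliques $V_1, V_2$ of size $n/2$ joined by a perfect matching $M = \{(u_i, v_i) : i \in [n/2]\}$ of bridge edges. Using the style of reasoning in Observation~\ref{main-diameter-two-property}, one checks that $G$ has diameter $2$: for $u_i \in V_1$ and $v_j \in V_2$ with $i \neq j$, the vertex $u_j$ is a common neighbor, and matched pairs are adjacent. Every node has degree exactly $n/2$, so the assumption ``$n$ is known'' is consistent with this construction. The hard input distribution assigns IDs uniformly at random to vertices and, independently for each node $u$, chooses a uniformly random permutation of $u$'s port numbers to its neighbors, so that every incident port of $u$ is a priori equally likely to be the bridge port.

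First I would establish the bridge-crossing necessity. Let $E$ be the event that at least one message is ever sent on a bridge edge during the execution. Under $\neg E$, each $u \in V_1$ communicates only with its $V_1$-clique neighbors, and its local view is determined entirely by $V_1$'s induced structure, its ID, and its private coin flips --- no information ever flows from $V_2$. By the symmetry of the random labeling, the number of leaders produced inside $V_1$ and inside $V_2$ under $\neg E$ are i.i.d.\ indicators $L_1, L_2$ with some common probability $p$. Correctness on $G$ requires $L_1 + L_2 = 1$, so
\[
 \Pr(\text{correct} \mid \neg E) \;\leq\; 2p(1-p) \;\leq\; \tfrac{1}{2}.
\]
Combining this with $\Pr(\text{correct}) \geq \tfrac{1}{2} + \epsilon$ via $\Pr(\text{correct}) \leq \Pr(E) + \tfrac{1}{2}\Pr(\neg E)$ yields $\Pr(E) \geq 2\epsilon$.

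Next I would bound the probability that any individual message lands on a bridge port. Fix an arbitrary deterministic algorithm $A$ (sufficient by Yao). For each node $u$, let $\tau_u$ be the number of messages $u$ sends; WLOG $u$ uses distinct ports (repeating a port yields no new information). Conditioned on the history observed by $u$ before any bridge message is traversed, the bridge port $b_u$ remains uniform over $u$'s unprobed ports, since all unprobed ports are exchangeable given the random ID/port labeling. Hence the probability that $u$ ever probes its bridge is at most $2\,E[\tau_u]/n$ (using $\tau_u \leq n/2$ WLOG). Taking a union bound over nodes,
\[
 \Pr(E) \;\leq\; \sum_{u \in V} \Pr\bigl(u \text{ hits its bridge}\bigr) \;\leq\; \frac{2}{n}\sum_{u \in V} E[\tau_u] \;=\; \frac{2\,E[T]}{n},
\]
where $T$ is the total number of messages sent by the algorithm. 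Combining $\Pr(E) \leq 2E[T]/n$ with $\Pr(E) \geq 2\epsilon$ yields $E[T] \geq \epsilon n = \Omega(n)$, as desired.

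The main obstacle is making the ``bridge port stays hidden'' step truly rigorous, because the response received by $u$ from an already-probed port carries the responding neighbor's ID and possibly state information, and one must argue this leaks no information about $b_u$'s location among $u$'s unprobed ports. In particular, the bridge partner $v_i$'s state differs from the state of a $V_1$-clique neighbor, since $v_i$'s view is shaped by its own $V_2$-neighbors. The cleanest way to handle this is a coupling/exchangeability argument: as long as no bridge edge has been traversed in either direction, for any two of $u$'s unprobed ports $p, p'$ there is a measure-preserving bijection on the input distribution that swaps the roles of $p$ and $p'$ (obtained by swapping the corresponding port labels at $u$ together with any induced relabeling needed to keep the labeling distribution invariant), and under this bijection the transcripts of all nodes coincide up to the first bridge-crossing step. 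This symmetry justifies the claim that the posterior distribution of $b_u$ conditioned on $u$'s history is uniform over $u$'s unprobed ports, which is the crux of the per-message $O(1/n)$ bridge-hit bound above.
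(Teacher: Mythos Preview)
Your proposal is correct and follows essentially the same bridge-crossing strategy as the paper. The paper's construction differs slightly: instead of joining two full $\tfrac{n}{2}$-cliques by a matching, it first removes a random perfect matching from each clique and then reconnects the freed ports across the cut, so that every node retains degree $\tfrac{n}{2}-1$; this lets the paper compare directly against the disconnected two-clique graph $G'$ (identical port structure) to obtain the $2q(1-q)\le\tfrac12$ bound, whereas you argue the i.i.d.\ property of $L_1,L_2$ directly via symmetry of the random labeling. The paper also packages the argument as a proof by contradiction (assume $o(n)$ messages with probability $1-o(1)$, then bridge crossing has probability $o(1)$) rather than your direct Yao-style inequality $E[T]\ge\epsilon n$, but the underlying per-message $O(1/n)$ bridge-hit estimate and the $\tfrac12$ cap on success without a crossing are identical. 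Your explicit exchangeability sketch for why unprobed ports remain symmetric is, if anything, more careful than the paper's Lemma~\ref{lem:disjoint}, which simply asserts the hypergeometric bound.
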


In the remainder of this section, we prove Theorem~\ref{thm:msgRandomized}.

Assume towards a contradiction, that there is an algorithm that elects a leader with probability $\frac{1}{2}+\epsilon$ that sends $o(n)$ messages with probability approaching $1$. In other words, we assume that the event where the algorithm sends more than $o(n)$ messages (of arbitrary size) happens with probability at most $o(1)$.

\paragraph{Unique IDs vs.\ Anonymous}
Before describing our lower bound construction, we briefly recall a simple reduction used in \cite{Kutten_2015_TCS} that shows that assuming unique IDs does not change the success probability of the algorithm by more than $\frac{1}{n}$: Since we assume that nodes have knowledge of $n$, it is straightforward to see that nodes can obtain unique IDs (whp) by choosing a random integer in the range $[1,n^c]$, for some constant $c\ge 4$. Thus, we can simulate an algorithm that requires unique IDs in the anonymous case and the simulation will be correct with high probability. Suppose that there is an algorithm $A$ that can break the message complexity bound of Theorem~\ref{thm:msgRandomized} while succeeding with probability $\ge \frac{1}{2} + \epsilon$, for some constant $\epsilon>0$, when nodes have unique IDs. Then, the above simulation yields an algorithm $A'$ that works in the case where nodes are \emph{anonymous} with the same message complexity bound as algorithm $A$ and succeeds with probability at least $(\frac{1}{2}  +  \epsilon - \frac{1}{n})   \geq   \frac{1}{2} + \epsilon'$, for some constant $\epsilon'  >  0$. We conclude that proving the lower bound for the anonymous case is sufficient to imply a lower bound for the case where nodes have unique IDs.

\paragraph{The Lower Bound Graph}
Our lower bound is inspired by the bridge crossing argument of \cite{Kutten_2015_JACM} and \cite{Pai_2017}. For simplicity, we assume that $\frac{n}{4}$ is an integer. Consider two cliques $C_1$ and $C_2$ of $\frac{n}{2}$ nodes each and let $G'$ be the $n$-node graph consisting of the two (disjoint) cliques. The \emph{port numbering} of an edge $e  =  (u_i,v_j)  \in  E(G')$ refers to the port number at $u_i$ and the respective port number at $v_j$ that connects $e$. The port numberings of the edges defines an \emph{instance of $G'$}.

Given an instance of $G'$, we will now describe how to obtain an instance of graph $G$ that has the same node set as $G'$. Fix some arbitrary enumeration $u_1, \dots, u_{\frac{n}{2}}$ of the nodes \footnote{This enumeration is used solely for the description of the lower bound construction and is unbeknownst to the nodes.} in $C_1$ and similarly let $v_1,\dots,v_{\frac{n}{2}}$ be an enumeration of the nodes in $C_2$. To define the edges of $G$, we randomly choose a maximal matching $M_1$ of $\frac{n}{4}$ edges in the subgraph $C_1$. Consider the set of edges $M_2' = \{ (v_i,v_j) \mid \exists (u_i,u_j) \in M_1 \}$, which is simply the matching in $C_2$ corresponding to $M_1$ in $C_1$. We define $M_2$ to be a randomly chosen maximal matching on $C_2$ when using only edges in $E(G') \setminus M_2'$. Then, we remove all edges in $M_1 \cup M_2$ from $G'$. So far, we have obtained a graph where each node has one \emph{unwired port}.

The edge set of $G$ consists of all the remaining edges of $G'$ in addition to the set $M = \{ (u_1,v_1),\dots,(u_{\frac{n}{2}},v_{\frac{n}{2}})\}$, where we connect these \emph{bridge edges} by using the unwired ports that we obtained by removing the edges as described above. We say that an edge is an \emph{intra-clique edge} if it has both endpoints in either $C_1$ or $C_2$. Observe that the intra-clique edges of $G$ are a subset of the intra-clique edges of $G'$. Figure~\ref{fig:messagelb} gives an illustration of this construction.

\begin{lemma} \label{lem:lbgraph}
Graph $G$ is an $n$-node network of diameter $2$ and the port numbering of each intra-clique edge in $G$ is the same as of the corresponding edge in $G'$.
\end{lemma}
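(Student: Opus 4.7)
My plan is to verify the three claims in the lemma in order: the node count, the port-numbering claim, and the diameter bound, which is the only non-trivial piece.

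For the node count, I would just remark that the construction of $G$ from $G'$ only rewires edges; the vertex set $V(C_1) \cup V(C_2)$ of size $n$ is never modified. For the port numbering of intra-clique edges, I would argue that by construction the intra-clique edges of $G$ are exactly those of $G'$ that survive removal of $M_1 \cup M_2$. Since deleting an edge and later using the freed-up port to wire a bridge edge affects only that single port, every edge that remains in both graphs keeps the same pair of port numbers at its two endpoints. Thus the port numbering of each intra-clique edge of $G$ agrees with the one fixed by the instance of $G'$.

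For the diameter claim I would proceed by cases on where two distinct nodes lie. First, if $u_i, u_j \in C_1$ are not adjacent in $G$, then $(u_i, u_j) \in M_1$, so in particular $u_i$ and $u_j$ are matched to each other. Since $M_1$ is a matching on $C_1$, every other node $u_k \in V(C_1) \setminus \{u_i, u_j\}$ is matched (if at all) to some node other than $u_i$ or $u_j$, so both edges $(u_k, u_i)$ and $(u_k, u_j)$ remain in $G$ and $u_k$ is a common neighbor; this just needs $|C_1| \ge 3$. The same argument using $M_2$ handles the case of two non-adjacent nodes in $C_2$. For the cross-clique case, a pair $(u_i, v_i)$ is directly connected by the bridge edge in $M$. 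For $i \ne j$, I would use the key observation that the two candidate length-two paths $u_i - u_j - v_j$ and $u_i - v_i - v_j$ cannot both be broken: the first fails only if $(u_i, u_j) \in M_1$, but by the very definition of $M_2'$ this forces $(v_i, v_j) \in M_2'$, and since $M_2 \subseteq E(G') \setminus M_2'$ the edge $(v_i, v_j)$ survives in $G$, making the second path valid.

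The step I expect to require the most care is not the existence argument above but rather making sure that the bridge edges in $M$ can consistently be plugged into the ports freed by the matching deletions. I would observe that after removing $M_1$ each node in $C_1$ loses at most one port (since $M_1$ is a matching), and the same holds in $C_2$ after removing $M_2$; moreover $M_1$ and $M_2$ are both maximal matchings of size $n/4$, so exactly every vertex in each clique ends up with exactly one unwired port (for those not covered by the matching, the unwired port is a port that had no edge attached in $G'$ — one needs to be slightly careful in interpreting "maximal matching" as a perfect matching here, which is the natural reading given that $n/2$ is even and the $C_i$ are cliques). Given that each of the $n$ vertices has exactly one unwired port, the $n/2$ bridge edges $(u_i, v_i)$ can be wired in, and the intra-clique port assignments are untouched. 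Once this bookkeeping is in place, the three conclusions of the lemma follow directly.
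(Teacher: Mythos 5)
Your proof is correct and follows essentially the same route as the paper's: the same case split by clique membership, and the same key observation for the cross-clique case that $(u_i,u_j)\in M_1$ forces $(v_i,v_j)\in M_2'$ and hence $(v_i,v_j)\notin M_2$, so the path $u_i\to v_i\to v_j$ survives. The only differences are cosmetic: for two non-adjacent nodes inside one clique you exhibit the common neighbors directly from the matching structure (any third node $u_k$ is matched to neither $u_i$ nor $u_j$), whereas the paper counts degrees ($\frac{n}{2}-2>\frac{|C_1|}{2}$) and invokes pigeonhole, and your port-bookkeeping remark (that the size-$\frac{n}{4}$ matchings are perfect, leaving each node exactly one free port) makes explicit a detail the paper asserts without proof.
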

\begin{proof}
By construction, each node in $C_1$ has the same port numbering in both graphs, except for its (single) incident edge that was replaced with a {bridge edge} to some node in $C_2$, thus we focus on showing that $G$ has diameter $2$.

We will show that node $u_i \in C_1$ has a path of length $2$ to every other node. Observe that any two nodes $u_i, u_j \in C_1$ both have $\frac{n}{2} - 2$ incident intra-clique edges and since $\frac{n}{2} - 2   >   \frac{|C_1|}{2}$ they must both have a common neighbor. Now, consider some node $v_j \in C_2$ and assume that $j \neq i$, as otherwise there is the matching edge $(u_i,v_i) \in M$. If $(u_i,u_j) \in E(G)$, then again the result follows because $(u_j,v_j) \in M$. Otherwise, there must be the path $u_i\rightarrow v_i\rightarrow v_j$, since, by construction, the edge $(v_i,v_j) \in M_2'$ and hence $(v_i,v_j) \notin M_2$.

A symmetric argument shows that every node has distance $\le 2$ from a given node in $C_2$.
\end{proof}

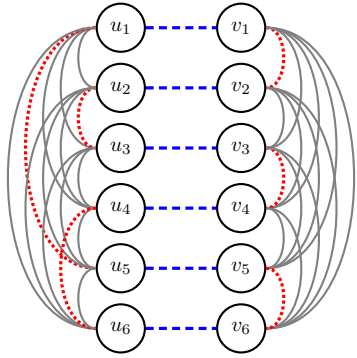
\begin{figure}[t]
\begin{center}
\pgfdeclarelayer{background}
\pgfdeclarelayer{inbetween}
\pgfsetlayers{background,inbetween,main}
\begin{tikzpicture}[node distance=1.0cm,scale=0.8, every node/.style={scale=0.8}]
\tikzstyle{v}=[circle,draw=black,fill=white!30,thick,inner sep=2pt,minimum size=8mm]
\tikzstyle{link}=[-,black!50,thick,auto]
\tikzstyle{llink}=[-,black!50,thick,auto,in=180,out=180]
\tikzstyle{rlink}=[-,black!50,thick,auto,in=0,out=0]
\tikzstyle{arc}=[draw=black,rectangle,rounded corners,fill=lightgray,font=\normalsize]
\tikzstyle{arcNeighborhoodZ}=[semitransparent,thick,draw=none,rectangle,rounded corners,fill=blue!20,pattern=horizontal lines light blue]
\tikzstyle{arcNeighborhoodZ1}=[arcNeighborhoodZ,fill=red!20,dashed]
\node[v]  (u001)                  {$u_{1}$};
\node[v]  (u002)  [below of=u001] {$u_{2}$};
\node[v]  (u003)  [below of=u002] {$u_{3}$};
\node[v]  (u004)  [below of=u003] {$u_{4}$};
\node[v]  (u005) [below of=u004] {$u_{5}$};
\node[v]  (u006) [below of=u005] {$u_{6}$};

\node[v]  (v001)  [xshift=2cm]    {$v_{1}$};
\node[v]  (v002)  [below of=v001] {$v_{2}$};
\node[v]  (v003)  [below of=v002] {$v_{3}$};
\node[v]  (v004)  [below of=v003] {$v_{4}$};
\node[v]  (v005) [below of=v004] {$v_{5}$};
\node[v]  (v006) [below of=v005] {$v_{6}$};

\draw[llink] (u001.west) to (u002.west);
\draw[llink] (u001.west) to (u003.west);
\draw[llink] (u001.west) to (u004.west);
\draw[llink,very thick,densely dotted,red] (u001.west) to (u005.west);
\draw[llink] (u001.west) to (u006.west);
\draw[llink,very thick,densely dotted,red] (u002.west) to (u003.west);
\draw[llink] (u002.west) to (u004.west);
\draw[llink] (u002.west) to (u005.west);
\draw[llink] (u002.west) to (u006.west);
\draw[llink] (u003.west) to (u004.west);
\draw[llink] (u003.west) to (u005.west);
\draw[llink] (u003.west) to (u006.west);
\draw[llink] (u004.west) to (u005.west);
\draw[llink,very thick,densely dotted,red] (u004.west) to (u006.west);
\draw[llink] (u005.west) to (u006.west);

\draw[rlink,very thick,densely dotted,red] (v001.east) to (v002.east);
\draw[rlink] (v001.east) to (v003.east);
\draw[rlink] (v001.east) to (v004.east);
\draw[rlink] (v001.east) to (v005.east);
\draw[rlink] (v001.east) to (v006.east);
\draw[rlink] (v002.east) to (v003.east);
\draw[rlink] (v002.east) to (v004.east);
\draw[rlink] (v002.east) to (v005.east);
\draw[rlink] (v002.east) to (v006.east);
\draw[rlink,very thick,densely dotted,red] (v003.east) to (v004.east);
\draw[rlink] (v003.east) to (v005.east);
\draw[rlink] (v003.east) to (v006.east);
\draw[rlink] (v004.east) to (v006.east);
\draw[rlink] (v004.east) to (v005.east);
\draw[rlink,very thick,densely dotted,red] (v005.east) to (v006.east);

\draw[link,blue,very thick,densely dashed] (u001) to (v001);
\draw[link,blue,very thick,densely dashed] (u002) to (v002);
\draw[link,blue,very thick,densely dashed] (u003) to (v003);
\draw[link,blue,very thick,densely dashed] (u004) to (v004);
\draw[link,blue,very thick,densely dashed] (u005) to (v005);
\draw[link,blue,very thick,densely dashed] (u006) to (v006);

\begin{pgfonlayer}{background}
\end{pgfonlayer}
\begin{pgfonlayer}{inbetween}
\end{pgfonlayer}
\end{tikzpicture}
\end{center}
\caption{\boldmath The lower bound graph construction used in Theorem~\ref{thm:msgRandomized} for $n=12$, with cliques $C_1$ and $C_2$, where $V(C_1) = \{u_1,\dots,u_6\}$ and $V(C_2) = \{v_1,\dots,v_6\}$. The dotted red edges are the edges in $M_1$ and $M_2$ that are removed from $C_1$ and $C_2$ when constructing $G$ and the blue dashed inter-clique edges are given by the maximal matching $M$ between $C_1$ and $C_2$. Each blue edge incident to some node $u_i$ is connected by using the port number of $u_i$'s (removed) red edge.}
\label{fig:messagelb}
\end{figure}
 

A \emph{state} $\sigma$ of the nodes in $C_1$ is a $\frac{n}{2}$-size vector of the local states of the $\frac{n}{2}$ nodes in $C_1$. Since we assume that nodes are anonymous, a state $\sigma$ that is reached by the nodes in $C_1$, can also be reached by the nodes in $C_2$. More formally, when executing the algorithm on the disconnected network $G'$, we can observe that every possible state $\sigma$ (of $\frac{n}{2}$ nodes) has the same probability of occurring in $C_1$ as in $C_2$. Thus, a state where there is exactly one leader among the $\frac{n}{2}$ nodes of a clique in $G'$, is reached with some specific probability $q$ depending on the algorithm. By a slight abuse of notation, we also use $G'$ and $G$ to denote the event that the algorithm executes on $G'$ respectively $G$. For the probability of the event $\One$, which occurs when there is exactly $1$ leader among the $n$ nodes, we get
\begin{align} \label{eq:maxProb}
    \Prob{ \One\ \middle|\ G' } = 2q(1 - q) \le \frac{1}{2}\text{,}
\end{align}
which holds for any value of $q$. Since $G'$ is disconnected, the algorithm does not need to succeed with nonzero probability when being executed on $G'$. However, below we will use this observation to obtain an upper bound on the probability of obtaining (exactly) one leader in $G$.

Now consider the execution on the diameter $2$ network $G$ (obtained by modifying the ports of $G'$ as described above) and let $\disjoint$ be the event that no message is sent across the bridges between $C_1$ and $C_2$. Since we assume the port numbering model where nodes are unaware of their neighbors initially, it follows by Lemma~\ref{lem:lbgraph} that
\begin{align}
    \Prob{ \One\ \middle|\ \disjoint, G } = \Prob{ \One\ \middle|\ G' }\text{.} \label{eq:same}
\end{align}

Let $M$ be the event that the algorithm sends $o(n)$ messages. Recall that we assume towards a contradiction that $\Prob{M \mid G} = 1 - o(1)$.
\begin{lemma} \label{lem:disjoint}
    $\Prob{ C_1 \leftrightarrow C_2\ \middle|\ G, M } = o(1)$.
\end{lemma}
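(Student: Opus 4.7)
The plan is to prove the lemma by coupling the executions of the algorithm on $G$ and $G'$ using the same algorithmic randomness $r$, combined with a deferred-decisions argument on the random matchings $M_1, M_2$. Since the port numbering of $G$ agrees with that of $G'$ at every port except at a single \emph{bridge port} $\sigma_u$ per node $u$ (the port that in $G'$ leads to $u$'s partner in $M_1$ or $M_2$), the two executions proceed step-by-step in lockstep up to the first bridge crossing in $G$, which by definition is the first step at which some node sends on its bridge port.

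Fix a threshold $f(n) = o(n)$ such that $\Pr[T > f(n) \mid G] = o(1)$, where $T$ is the total number of messages in the $G$-execution, and let $S^{f(n)}(r)$ denote the set of (sender, port) pairs occurring among the first $f(n)$ send events of the $G'$-execution with randomness $r$; note that $|S^{f(n)}(r)| \le f(n)$ and that $S^{f(n)}(r)$ depends only on $r$ (not on the random bridge structure $\Pi$). On the event $(C_1 \leftrightarrow C_2) \cap M$, the first bridge crossing in $G$ occurs at some step $\tau \le T \le f(n)$, so the (sender, port) pair used at step $\tau$ lies in $S^{f(n)}(r)$ and equals $(u, \sigma_u)$ for some $u$. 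By the symmetry of the uniform random matching construction, each $\sigma_u$ has marginal distribution putting weight at most $O(1/n)$ on any single port of $u$. Conditioning on $r$ and applying a union bound,
\begin{align*}
\Pr\!\left[\,\exists u : \sigma_u \in S^{f(n)}_u(r) \,\middle|\, r\,\right]
\;\le\; \sum_u |S^{f(n)}_u(r)| \cdot O(1/n)
\;=\; O\!\left(f(n)/n\right)
\;=\; o(1),
\end{align*}
uniformly in $r$. Taking expectation over $r$ then gives $\Pr[(C_1 \leftrightarrow C_2) \cap M \mid G] = o(1)$, and dividing by $\Pr[M \mid G] = 1 - o(1)$ yields the desired conclusion.

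The main obstacle is formalizing the coupling cleanly, via an induction over the synchronous rounds: one needs to argue that ``no bridge \emph{send} has occurred yet'' is exactly the condition under which the $G$- and $G'$-executions remain identical. While a bridge edge could in principle also be traversed by a received message, any such reception must have been triggered by an earlier bridge send, so the first bridge crossing is necessarily a send event --- this is what justifies projecting the $G$-execution onto a prefix of the $G'$-execution's send sequence. A secondary (and minor) point is the weak dependence among the $\sigma_u$'s induced by the random matching structure, which is circumvented by using only the marginal distribution of each $\sigma_u$ in the union bound above.
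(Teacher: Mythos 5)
Your proof is correct and is essentially the same argument as the paper's: a deferred-decisions (``guessing game'') bound showing that each of the $o(n)$ probed ports has only an $O(1/n)$ chance of being a bridge port, so no bridge port is hit except with probability $o(1)$. The paper phrases this as $\mathrm{E}[X \mid G, M] = \sum_u S_u/\Theta(n) = o(1)$ followed by Markov's inequality, while you use a union bound over the first $f(n)$ send events of the coupled $G'$-execution; your explicit coupling and the observation that the probed (sender, port) set depends only on the algorithm's randomness is a slightly more careful rendering of what the paper leaves implicit when it conditions on the event $M$.
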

\begin{proof}
The proof is inspired by the guessing game approach of \cite{Gilbert_2018} and Lemma~16 in \cite{Pai_2017}. Initially, any node $u \in C_1$ has $\frac{n}{2} - 1$ ports that are all equally likely (i.e., a probability $p = \frac{1}{\frac{n}{2} - 1}$) to be connected to the (single) bridge edge incident to $u$. As $u$ sends messages to other nodes, it might learn about some of its ports connecting to non-bridge edges and hence this probability can increase over time. However, we condition on event $M$, i.e., the algorithm sends at most $o(n)$ messages in total and hence at least $\frac{n}{4}$ ports of each node $u$ remain unused at any point.

It follows that the probability of some node $u$ to activate a (previously unused) port that connects a bridge edge is at most $\frac{4}{n}$ at any point of the execution. Let $X$ be the total number of ports connecting bridge edges that are activated during the run of the algorithm and let $X_u$ be the indicator random variable that is $1$ iff node $u$ sends a message across its bridge edge. Let $S_u$ be the number of messages sent by node $u$. It follows by the hypergeometric distribution that 
\[
  \textrm{E}[X_u \mid G, M ] = S_u \frac{1}{\Theta(n)}, 
\]
for each node $u$ and hence, 
\begin{center}
    $\textrm{E}[X \mid G, M]   =   \sum_{u \in V(G)}\frac{S_u}{\Theta(n)}   =   \frac{1}{\Theta(n)}\sum_{u \in V(G)}S_u   =   o(1)$,
\end{center}
where we have used the fact that $\sum_{u \in V(G)}  S_u = o(n)$ due to conditioning on event $M$. By Markov's Inequality, it follows that the event $C_1 \leftrightarrow C_2$, i.e., $X\ge 1$, occurs with probability at most $o(1)$.
\end{proof}

We now combine the above observations to obtain
\begin{align}
    \Prob{ \One\ \middle|\ G, M } 
    &=   \Prob{ \One\ \middle|\ \disjoint, G, M }\Prob{\disjoint\ \middle|\ G, M} \notag\\
    &+   \Prob{ \One\ \middle|\ C_1 \leftrightarrow C_2, G, M } \Prob{ C_1 \leftrightarrow C_2\ \middle|\ G, M}\notag \\
    &\le \Prob{ \One\ \middle|\ \disjoint, G, M } + o(1) \tag{by Lem.~\ref{lem:disjoint}} \\
    &\le \frac{1}{2} + o(1),  \label{eq:uppBnd1}
\end{align}
where the last inequality follows by first using \eqref{eq:same} and noting that the upper bound~\eqref{eq:maxProb} still holds when conditioning on the event $M$.

Finally, we recall that the algorithm succeeds with probability at least $\frac{1}{2} + \epsilon$ and $\Prob{M \mid G} \ge 1 - o(1)$, which yields
\begin{align*}
    \frac{1}{2} + \epsilon 
    \le \Prob{ \One\ \middle|\ G }
    &\le  \Prob{ \One\ \middle|\ G, M}  + o(1) 
    \le \frac{1}{2} + o(1),
\end{align*}
which is a contradiction, since we have assumed that $\epsilon  >  0$ is a constant.

This completes the proof of Theoem \ref{thm:msgRandomized}.

\section{A Deterministic Algorithm} \label{sec:deter}

Our algorithm (Algorithm 2) is inspired by the solution of Afek and Gafni~\cite{Afek_1991} for the $n$-node clique. However, there are some complications that we explain below, since we cannot rely on all nodes to be connected by an edge. Note that our algorithm assumes that $n$ (or a constant factor upper bound for $\log n$) is known to all nodes.

For any node $v \in V$, we denote the degree of $v$ by $d_v$ and the ID of $v$ by $ID_v$. At any time-point in the algorithm, $L_v$ denotes the highest ID that $v$ has so far learned (among all the probe messages it has received, in the current round or in some previous round).

The algorithm proceeds as a sequence of $\Theta(\log n)$ phases. Initially every node is a ``candidate'' and is ``active''. Each node $v$ numbers its neighbors from $1$ to $d_v$, denoted by $w_{v, 1}, w_{v, 2}, \ldots, w_{v, {d_v}}$ respectively. In phase $i$, if a node $v$ is active, $v$ sends probe-messages containing its ID to its neighbors $w_{v,2^{i -1}},\dots,w_{v,k}$, where $k = \min\left\{d_v, 2^i - 1\right\}$. Each one of them replies back with the highest ID it has seen so far. If any on those ID's is higher than $ID_v$, then $v$ stops being a candidate and becomes inactive. Node $v$ also becomes inactive if it has finished sending probe-messages to all its neighbors. After finishing the $\Theta(\log n)$ phases $v$ becomes leader if it is still a candidate.

The idea behind the algorithm is to exploit the \emph{neighborhood intersection property} (cf.\ Observation \ref{main-diameter-two-property}) of diameter-$2$ networks. Since for any $u, v \in V$, there is an $x \in V$ that is connected to both $u$ and $v$ (unless $u$ and $v$ are directly connected via an edge) and acts as a ``referee'' node for candidates $u$ and $v$. This means that $x$ serves to inform $u$ and $v$ who among them is the winner, i.e., has the higher ID. Thus at the end of the algorithm, every node except the one with the highest ID should know that he is not a leader. We present the formal analysis of Theorem~\ref{thm:deterministic} in Sections ~\ref{section-deterministic-proof-of-correctness} and \ref{sec:detmsgcomplexity}.

\begin{theorem} \label{thm:deterministic}
  There exists a deterministic leader election algorithm for $n$-node networks with diameter at most $2$ that sends $O(n\log{n})$ messages and terminates in $O(\log{n})$ rounds.
\end{theorem}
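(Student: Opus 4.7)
The plan is to establish three properties of Algorithm~2: (i) exactly one node is elected leader, (ii) the algorithm terminates in $O(\log n)$ rounds, and (iii) the total number of messages sent is $O(n\log n)$.

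For uniqueness, I would first show that the node $v^*$ with the globally maximum ID always remains a candidate: no reply $v^*$ ever receives can carry an ID larger than $ID_{v^*}$, so it is never forced to give up its candidate status, and it completes all $\Theta(\log n)$ phases still in the candidate set. I would then argue that every other node $v$ is removed from the candidate set by the end of the final phase. Here the neighborhood intersection property (Observation~\ref{main-diameter-two-property}) is essential: either $v^*$ is directly adjacent to $v$, in which case the probe that $v$ sends to $v^*$---issued by phase $\lceil \log(d_v + 1)\rceil \le \lceil \log n\rceil$---triggers a reply containing $ID_{v^*}$ and eliminates $v$; or $v$ and $v^*$ share a common referee $x$, in which case I would argue that the doubling probing schedule, run for $\Theta(\log n)$ phases, forces at least one such $x$ to be informed of $ID_{v^*}$ before responding to a probe of $v$. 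Making this timing argument robust against adversarial neighbor orderings is the most delicate part of the correctness proof.

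The round complexity is immediate from the structure of Algorithm~2: in phase $i$ a candidate issues up to $2^{i-1}$ probes along distinct edges, so all of them can be sent in a single $\mathcal{CONGEST}$ round, with replies arriving in the next round. Each of the $\Theta(\log n)$ phases thus takes $O(1)$ rounds, giving the claimed $O(\log n)$ round bound.

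The heart of the proof is the message complexity. Let $c_i$ denote the number of candidates that are still active at the start of phase $i$. Phase $i$ contributes at most $2c_i \cdot 2^{i-1}$ messages (probes together with their replies), so it suffices to establish the ``halving'' inequality $c_i = O(n/2^{i-1})$; summing over $\Theta(\log n)$ phases then gives $O(n\log n)$. To bound $c_i$, I would adopt the ``territory'' paradigm of Afek--Gafni: a candidate $v$ that reaches phase $i$ has probed $2^{i-1}-1$ referees, each of which reported a value no larger than $ID_v$, so assigning $v$ together with these referees to a set $T_v$ of size $2^{i-1}$ is natural. In a clique these $T_v$'s are essentially disjoint, but in a diameter-two graph two candidates need not share every referee, so I would develop a charging scheme that assigns each referee $x$ to only a bounded number of candidates, using the fact that a lower-ID candidate can survive a shared referee only if it probed that referee strictly before any higher-ID candidate did. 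Carrying out this charging so that $\sum_v |T_v| = O(n)$ while respecting the asymmetric scheduling of probes across different candidates is the main obstacle and the step where the analysis departs substantively from Afek--Gafni.
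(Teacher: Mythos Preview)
Your plan captures the right three pieces, but both the correctness sketch and the message-complexity sketch miss the algorithmic ingredient that makes everything go through cleanly: Line~\ref{tell-previous-master-about-new-master} of Algorithm~\ref{alg:deterministic:implicit}, where a referee $x$ remembers (via $N_x$) which neighbor currently ``owns'' its $L_x$ and, whenever $L_x$ increases, proactively notifies that previous owner.

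For correctness, your claimed invariant---that some common referee $x$ is informed of $ID_{v^*}$ \emph{before} it answers a probe from $v$---is simply false under adversarial port orderings. It is perfectly possible that $v$ probes $x$ in phase~$1$ while $v^*$ only reaches $x$ in phase~$\log n$; at the time $x$ replies to $v$, it knows nothing about $v^*$. The paper's Lemma~\ref{deterministic-algorithm-correctness} handles exactly this ``$u$ probes $x$ before $v^{\max}$ does'' case (Case~2(b)) by invoking the $N_x$ mechanism: once $x$ later sees a larger ID, it sends an extra message to its old master, which eventually propagates the bad news to $v$. Without appealing to this delayed-notification step your correctness argument does not close.

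For message complexity, the same mechanism is what makes the territories \emph{genuinely disjoint}, so the charging scheme you anticipate is unnecessary. The paper defines $KINGDOM(v)=\{v\}\cup\{w_{v,1},\dots,w_{v,2^i-1}\}$ for each node $v$ still active after round~$i$ and proves (Lemma~\ref{not-too-many-active-nodes-at-round-i}) that two such kingdoms cannot intersect: if $x$ lies in both $KINGDOM(u)$ and $KINGDOM(v)$ with $ID_v>ID_u$, then regardless of which of $u,v$ probed $x$ first, the $N_x$ bookkeeping forces $u$ to receive a message carrying an ID larger than $ID_u$ by round~$i$, contradicting $u$'s activity. Disjointness gives $c_{i+1}\le n/2^i$ directly; each round then costs at most $3n$ messages (probes, replies, and the extra ``your master changed'' messages), and summing over $O(\log n)$ rounds yields $O(n\log n)$. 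Your proposed charging argument, premised on the territories overlapping, is fighting a difficulty the algorithm has already engineered away.
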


In the pseudocode and the subsequent analysis we use $v$ and $ID_v$ interchangeably to denote the node $v$.

\begin{algorithm}[t]
\begin{algorithmic}[1]
	\State $v$ becomes a ``candidate'' and ``active''.
	\State $L_v \gets ID_v$.
	\State $N_v \gets ID_v$.
	\State $v$ numbers its neighbors from $1$ to $d_v$, which are called $w_{v, 1}, w_{v, 2}, \ldots, w_{v, {d_v}}$ respectively.
	\For{phase $i = 1$ to $\Theta(\log{n})$}
		\If{$v$ is active}
			\State $v$ sends a ``probe" message containing  its ID to its neighbors $w_{v,2^{i -1}},\dots,w_{v,\text{min}\left\{d_v, 2^i - 1\right\}}$.	
			\If{$d_v  \leq  2^i - 1$} \MyComment{If $v$ is finished with exploring its adjacency list, $v$ becomes inactive.}
      \label{finished-exploring-all-neighbors}
				\State $v$ becomes inactive.
			\EndIf
		\EndIf
		
		\State Let $X$ be the set (possibly empty) of neighbors of $v$ from whom $v$ receives messages in this round.
		\State Let $\mathcal{I}\mathcal{D}$ be the set of ID's sent to $v$ by the members of $X$.
		\State Let $ID_u$ be the highest ID in $\mathcal{I}\mathcal{D}$.
		
		\If{$ID_u > L_v$}\label{update-leader}
			\State $v$ sends $ID_u$ to $N_v$. \label{tell-previous-master-about-new-master}
			\State $L_v \gets ID_u$. \MyComment{$v$ stores the highest ID seen so far in $L_v$.}
			\State $N_v \gets x$. \MyComment{$v$ remembers neighbor who told $v$ about $L_v$.}
			\State $v$ becomes ``inactive'' and ``non-candidate''.
		\EndIf
		
		\State $v$ tells every member of $X$ about $L_v$, i.e., the highest ID it has seen so far. \label{tell-sender-about-current-master}
	\EndFor

	\If{$v$ is still a candidate}
		\State $v$ elects itself to be the leader.
	\EndIf
\end{algorithmic}
\caption{Deterministic Leader Election in $O(\log{n})$ rounds and with $O(n\log{n})$ messages: Code for a node $v$} \label{alg:deterministic:implicit}
\end{algorithm}

\subsection{Proof of Correctness}\label{section-deterministic-proof-of-correctness}
Define $v^{\text{max}}$ to be the node with the highest ID in $G$. 
\begin{lemma}
	$v^{\text{max}}$ becomes a leader.
\end{lemma}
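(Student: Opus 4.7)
The plan is to establish a simple invariant for $v^{\max}$ and then read the claim off the pseudocode of Algorithm~\ref{alg:deterministic:implicit}. Specifically, I will prove that throughout the entire execution the stored value $L_{v^{\max}}$ remains equal to $ID_{v^{\max}}$, and that consequently $v^{\max}$ never leaves the ``candidate'' state. Once this is done, the conclusion is immediate: after the $\Theta(\log n)$ phases, $v^{\max}$ is still flagged as a candidate, and by the final \texttt{if} of the algorithm it self-elects as leader.

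For the invariant, the key observation is that $L_v$ is only ever updated on line~\ref{update-leader}, inside the branch guarded by the condition $ID_u > L_v$, where $ID_u$ is the largest ID received in that round. Since $ID_{v^{\max}}$ is by definition the globally largest identifier, every ID that $v^{\max}$ ever sees in an incoming message satisfies $ID_u \le ID_{v^{\max}} = L_{v^{\max}}$, and thus the guard on line~\ref{update-leader} never fires at $v^{\max}$. This single fact simultaneously preserves the invariant $L_{v^{\max}} = ID_{v^{\max}}$ and prevents the status assignment to ``non-candidate'' that sits inside the same block.

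To finish, I will inspect Algorithm~\ref{alg:deterministic:implicit} and verify that the block at line~\ref{update-leader} is the only location in the code where a node's candidate flag is ever cleared. The only other status transition that can fire at $v^{\max}$---becoming inactive once $d_v \le 2^i - 1$ at line~\ref{finished-exploring-all-neighbors}, i.e., once $v^{\max}$ has finished probing all of its neighbors---only flips the \emph{active} flag, leaving the candidate status intact. Since the lemma is essentially immediate from the structure of Algorithm~\ref{alg:deterministic:implicit}, the main thing to check carefully is this case analysis on the pseudocode, and I do not anticipate any substantive obstacle. Uniqueness of the elected leader (i.e., that \emph{no other} node also stays a candidate) is a separate claim and will be handled in a subsequent lemma via the diameter-two neighborhood-intersection property (Observation~\ref{main-diameter-two-property}).
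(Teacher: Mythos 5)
Your proof is correct and follows essentially the same route as the paper's: since $ID_{v^{\max}}$ is the globally largest identifier, the guard at Line~\ref{update-leader} never fires at $v^{\max}$, so it never becomes a non-candidate and therefore self-elects after the final phase. Your additional checks (the invariant $L_{v^{\max}} = ID_{v^{\max}}$ and the observation that Line~\ref{finished-exploring-all-neighbors} only clears the active flag, not the candidate flag) are just a more explicit rendering of the same one-line argument.
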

\begin{proof}
	Since $v^{\text{max}}$ has the highest ID in $G$, the if-clause of Line \ref{update-leader} of Algorithm \ref{alg:deterministic:implicit} is never satisfied for $v^{\text{max}}$. Therefore $v^{\text{max}}$ never 
	becomes a non-candidate, and hence becomes a leader at the end of the algorithm.
\end{proof}

\begin{lemma}\label{deterministic-algorithm-correctness}
	No other node except $v^{\text{max}}$ becomes a leader.
\end{lemma}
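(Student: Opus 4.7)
The plan is to show that every $v \neq v^{\max}$ becomes a non-candidate via line \ref{update-leader} at some point during the $\Theta(\log n)$ phases. The main observation is that $v^{\max}$, having the highest ID, never itself satisfies the condition on line \ref{update-leader} and therefore remains active until it has probed every one of its neighbors; after $\lceil \log(d_{v^{\max}} + 1) \rceil = O(\log n)$ phases, each $y \in N(v^{\max})$ has received a probe from $v^{\max}$ carrying $ID_{v^{\max}}$ and hence has $L_y \ge ID_{v^{\max}}$.

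The easy case is when $v$ is a neighbor of $v^{\max}$: $v^{\max}$ probes $v$ at some phase, $v$ receives $ID_{v^{\max}} > ID_v$, and line \ref{update-leader} fires. For the harder case, $v$ and $v^{\max}$ are non-adjacent, and Observation \ref{main-diameter-two-property} provides a common neighbor $w$. Assume for contradiction that $v$ stays a candidate throughout, so $L_v = ID_v$ always. Then $v$ also remains active until it has exhausted its adjacency list, and therefore probes $w$ in some phase $i_v$; similarly $v^{\max}$ probes $w$ in some phase $i_{v^{\max}}$. If $i_{v^{\max}} \le i_v$, then $L_w \ge ID_{v^{\max}}$ holds when $v$ probes $w$, so $w$'s reply at line \ref{tell-sender-about-current-master} carries a value $> ID_v$ and $v$ becomes non-candidate, a contradiction. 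So we are left with $i_v < i_{v^{\max}}$.

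In this remaining case, since $v$ is still a candidate through phase $i_v$, $w$'s reply to $v$'s probe must carry $L_w \le ID_v$; since $v$'s own probe contributes $ID_v$ to $w$'s processing, at the end of phase $i_v$ we have $L_w = ID_v$, and $N_w$ equals either $v$ itself (if $v$'s probe was what raised $L_w$ to $ID_v$) or the head of an $N$-pointer chain $w = y_0 \to y_1 \to \cdots \to y_\ell = v$ formed earlier as $ID_v$ propagated outward from $v$ through probe-response interactions. I would then argue, by induction on the chain length (which is bounded by $i_v = O(\log n)$ since each new chain node joins via a probe in a distinct phase), that the chain remains intact, with every $y_j$ still holding $L_{y_j} = ID_v$ and $N_{y_j} = y_{j+1}$ at the start of phase $i_{v^{\max}}$: if any $y_j$ had had its $L$-value raised above $ID_v$ earlier, line \ref{tell-previous-master-about-new-master} would have sent the new higher ID to the then-current $N_{y_j} = y_{j+1}$, starting a cascade that by the inductive hypothesis reaches $v$ and contradicts $v$'s candidacy.

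The finish is then immediate: in phase $i_{v^{\max}}$, $v^{\max}$'s probe forces $L_w$ up to $ID_{v^{\max}}$, so by line \ref{tell-previous-master-about-new-master}, $w$ relays $ID_{v^{\max}}$ to $y_1$; each $y_j$ in turn, still holding $L = ID_v < ID_{v^{\max}}$, updates and forwards $ID_{v^{\max}}$ to $y_{j+1}$, until $y_\ell = v$ receives $ID_{v^{\max}}$ and line \ref{update-leader} fires. The main obstacle is the chain invariant: carefully bounding the chain length by $O(\log n)$ so that the cascade completes within the $\Theta(\log n)$ phases allotted, and handling the precise synchronous timing of probes, responses, and notifications that can interact within a single phase.
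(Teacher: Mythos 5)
Your proposal is correct in outline and follows the same skeleton as the paper's proof: handle neighbors of $v^{\max}$ directly, and for non-neighbors use the common neighbor $w$ guaranteed by Observation \ref{main-diameter-two-property}, splitting on who probes $w$ first. The difference is in the sub-case where $v$ reaches $w$ before $v^{\max}$ does. The paper argues this in one hop: since $w$ has not yet seen an ID exceeding $ID_v$, it sets $N_w \gets v$ upon $v$'s probe, so when the first larger ID later arrives at $w$, the single notification of line \ref{tell-previous-master-about-new-master} goes straight to $v$. You instead build an $N$-pointer chain from $w$ back to $v$ and propagate $ID_{v^{\max}}$ along it by a cascade. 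Your heavier machinery is not wasted: it covers a corner case the paper's one-hop argument quietly skips, namely when $w$ has \emph{already} learned $ID_v$ indirectly (e.g., $v$ probed some $z$ in an earlier phase and $w$ later probed $z$ and received $L_z = ID_v$ in the reply of line \ref{tell-sender-about-current-master}); then $v$'s probe does not trigger line \ref{update-leader} at $w$ (the test is strict), $N_w \neq v$, and the later notification from $w$ does not reach $v$ directly. Your chain/cascade argument is exactly what closes that case. Two points you flag as obstacles do need to be pinned down, but both go through: the receive-and-relay portion of the loop body (lines after the \textbf{if active} block, including lines \ref{tell-previous-master-about-new-master} and \ref{tell-sender-about-current-master}) is executed unconditionally even by inactive non-candidates, so intermediate chain nodes keep forwarding; and since the chain has length at most $i_v$ and the cascade starts by phase $i_{v^{\max}}$, running the loop for $2\lceil\log n\rceil + O(1)$ phases (still $\Theta(\log n)$, consistent with the algorithm's statement) suffices for the cascade to reach $v$. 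So your route is a valid, and in one respect more careful, alternative to the paper's shorter argument.
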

\begin{proof}
	Consider any $u \in V$ such that $u \neq v^{\text{max}}$.

	\begin{itemize}
		\item \textbf{Case $1$ ($v^{\text{max}}$ and $u$ are connected via an edge):} Since $v^{\text{max}}$ has the highest ID in $G$, the if-clause of Line \ref{update-leader} of Algorithm 
		\ref{alg:deterministic:implicit} is never satisfied for $v^{\text{max}}$. Therefore $v^{\text{max}}$ becomes inactive only if it has already sent probe-messages to all its neighbors (or $v^{\text{max}}$ never becomes inactive). In particular, $u$ always receives a probe-message from $v^{\text{max}}$ containing $ID_{v^{\text{max}}}$. Since $ID_{v^{\text{max}}} > ID_u$, $u$ becomes a non-candidate at that point (if $u$ was still a candidate until that point) and therefore does not become a leader.

		\item \textbf{Case $2$ ($v^{\text{max}}$ and $u$ do not have an edge between them):} By Observation \ref{main-diameter-two-property}, there is some $x \in V$ such that both  $v^{\text{max}}$ and $u$ have edges going to $x$. And we have already established that $v^{\text{max}}$ will always send a probe-message to $x$ at some point of time or another.
		
		\begin{itemize}
            \item \textbf{Case $2$(a) ($u$ does not send a probe-message to $x$):} This implies that $u$ became inactive before it could send a probe-message to $x$. But then $u$ could have become inactive only if the if-clause of Line \ref{update-leader} of Algorithm \ref{alg:deterministic:implicit} got satisfied at some point. Then $u$ became a non-candidate too at the same time and therefore would not become a leader.
			
			\item \textbf{Case $2$(b) ($u$ sends a probe-message to $x$ before $v^{\text{max}}$ does):} Suppose $u$ sends a probe-message to $x$ at round $i$ and $v^{\text{max}}$ sends a probe-message to $x$ at round $i'$, where $1 \leq i < i' \leq \log{n}$. If $x$ had seen an ID higher than $ID_u$ up until round $i$, then $x$ immediately informs $u$ and $u$ becomes a non-candidate.
			
			So suppose not. Then, after round $i$, $x$ sets its local variables $L_x$ and $N_x$ to $ID_u$ and $u$ respectively. Let $j > i$ be the smallest integer such that $x$ receives a probe-message from a neighbor $u'$ at round $j$, where $ID_{u'} > ID_u$. Note that $v^{\text{max}}$ will always send a probe-message to $x$, therefore such a $u'$ exists. Then, after round $j$, $x$  sets its local variables $L_x$ and $N_x$ to $ID_{u'}$ and $u'$ respectively, and informs $u$ of this change. $u$ becomes a non-candidate at that point of time.
			
			\item \textbf{Case $2$(c) ($u$ and $v^{\text{max}}$ each sends a probe-message to $x$ at the same time):} Since $ID_{v^{\text{max}}}$ is the highest ID in the network, $L_x$ is assigned the value $ID_{v^{\text{max}}}$ at this point, and $x$ tells $u$ about $L_x = ID_{v^{\text{max}}} > ID_u$, causing $u$ to become a non-candidate.
			
			\item \textbf{Case $2$(d) ($u$ sends a probe-message to $x$ after $v^{\text{max}}$ does):} Suppose $v^{\text{max}}$ sends a probe-message to $x$ at round $i$ and $u$ sends a probe-message to $x$ at round $i'$, where $1 \leq i < i' \leq \log{n}$. Then $x$ sets its local variables $L_x$ and $N_x$ to $ID_{v^{\text{max}}}$ and $v^{\text{max}}$, respectively, after round $i$. So when $u$ comes probing at round $i' > i$, $x$ tells $u$ about $L_x = ID_{v^{\text{max}}} > ID_u$, causing $u$ to become a non-candidate.
		\end{itemize}
	\end{itemize}
\end{proof}

Thus Algorithm \ref{alg:deterministic:implicit} elects a unique leader (the node with the highest ID) and is therefore correct.
\subsection{Message Complexity} \label{sec:detmsgcomplexity}

\begin{lemma}\label{not-too-many-active-nodes-at-round-i}
	At the end of round $i$, there are at most $\frac{n}{2^i}$ ``active'' nodes.
\end{lemma}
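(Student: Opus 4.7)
The plan is to prove the bound by a disjointness argument: I will associate to each active node a ``cluster'' of $2^i$ distinct nodes (itself plus the neighbors it has successfully probed) and show these clusters are pairwise disjoint. Since all clusters live in $V$, the number of active nodes at the end of phase $i$ is at most $n/2^i$.

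First I would pin down the cluster. If $v$ is still active at the end of phase $i$, then the termination condition on line \ref{finished-exploring-all-neighbors} has never fired, so $d_v \ge 2^i$. Consequently the probes $v$ has sent across phases $1,\dots,i$ exhaust indices $1,\dots,2^i-1$ of its adjacency list, giving exactly $2^i-1$ distinct probed neighbors. Define
\[
    S_v \;:=\; \{v\} \cup \{w_{v,1},\, w_{v,2},\, \dots,\, w_{v,2^i-1}\},\qquad |S_v| = 2^i.
\]

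The heart of the proof is showing $S_u \cap S_v = \emptyset$ for any two distinct nodes $u,v$ that are still active at the end of phase $i$. Let $x \in S_u \cap S_v$. If $x=u$, then $v$ must have probed $u$ in some phase $j\le i$, so by line \ref{update-leader} the node $u$ received $ID_v$; since $u$ is still active we must have $ID_v < ID_u$, but then in that very phase $u$ sent $L_u \ge ID_u > ID_v$ back to $v$ (line \ref{tell-sender-about-current-master}), triggering the inactivation of $v$, a contradiction (and the case $x=v$ is symmetric). If $x \notin \{u,v\}$, then both $u$ and $v$ probed $x$. Assume without loss of generality $ID_u < ID_v$ and that $v$'s probe to $x$ was delivered in phase $j_2 \le i$. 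After $x$ processes $v$'s probe, $L_x \ge ID_v > ID_u$; the algorithm then propagates this value back to $u$, either through the reply on line \ref{tell-sender-about-current-master} in the phase when $u$ probed $x$ (if $u$'s probe was no earlier than $v$'s) or via the notification on line \ref{tell-previous-master-about-new-master} sent to the previous master when $x$ updates $L_x$ upon $v$'s probe (if $u$'s probe preceded $v$'s). In either case $u$ receives an ID strictly larger than $L_u$ and becomes inactive by phase $i$, contradicting the assumption.

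Given disjointness, the clusters of all active nodes fit inside $V$, so if $K_i$ denotes the number of active nodes at the end of phase $i$, then $K_i \cdot 2^i \le n$, i.e., $K_i \le n/2^i$, as claimed. The one delicate step is the argument for $x \notin \{u,v\}$, because one must be careful about which phase each probe is delivered in and which of the two mechanisms (reply via line \ref{tell-sender-about-current-master} or notification via line \ref{tell-previous-master-about-new-master}) actually informs the smaller-ID node; I would therefore split that case along the two possible orderings of the probes' phases to make the information flow explicit. Everything else is bookkeeping.
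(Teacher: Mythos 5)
Your proof is correct and follows essentially the same route as the paper's: the paper also assigns each active node $v$ a ``kingdom'' $\{v\}\cup\{w_{v,1},\dots,w_{v,2^i-1}\}$ of size $2^i$ and establishes pairwise disjointness by a case analysis on the relative timing of the two probes reaching a common referee $x$, then counts. The only cosmetic difference is that you handle the case $x\in\{u,v\}$ by a direct argument, whereas the paper dismisses it by first observing that every probed neighbor of a still-active node must already be inactive.
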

\begin{proof}
	Consider a node $v$ that is active at the end of round $i$. This implies that the if-clause of Line \ref{update-leader} of Algorithm \ref{alg:deterministic:implicit} has not so far been satisfied for $v$, which in turn implies that $ID_v > ID_{w_{v, j}}$ for $1 \leq j \leq 2^i - 1$, therefore none of
	\begin{center}
		$w_{v, 1}, w_{v, 2}, \ldots, w_{v, 2^i-1}$
	\end{center}
	is active after round $i$. Thus for every active node at the end of round $i$, there are at least $2^i - 1$ inactive nodes. We call this set of inactive nodes, together with $v$ itself, the ``kingdom'' of $v$, i.e., 
	\begin{center}
		$KINGDOM(v)   \defeq   \left\{v\right\}  \cup  \left\{w_{v, 1}, w_{v, 2}, \ldots, w_{v, 2^i-1}\right\}$
	\end{center}
	\begin{center}
		and $|KINGDOM(v)| = 2^i$.
	\end{center}
	If we can show that these kingdoms are disjoint for two different active nodes, then we are done.\\
	
	\paragraph{Proof by contradiction}
	Suppose not. Suppose there are two nodes $u$ and $v$ such that
	\begin{center}
		$u \neq v$ and $KINGDOM(u) \cap KINGDOM(v)  \neq  \phi$
	\end{center}	
	(after some round $i$, $1 \leq i \leq \log{n}$). Let $x$ be such that $x  \in  KINGDOM(u) \cap KINGDOM(v)$. Since an active node obviously cannot belong to the kingdom of another active node, this $x$ equals neither $u$ nor $v$, and therefore
	\begin{center}
		$x   \in   \left\{w_{v, 1}, w_{v, 2}, \ldots, w_{v, 2^i-1}\right\}  \cap  \left\{w_{u, 1}, w_{u, 2}, \ldots, w_{u, 2^i-1}\right\}$,
	\end{center}
	that is, both $u$ and $v$ have sent their respective probe-messages to $x$. Without loss of generality, let $ID_v > ID_u$.
	
	\begin{itemize}
        \item \textbf{Case $1$ ($u$ sends a probe-message to $x$ before $v$ does):} Suppose $u$ sends a probe-message to $x$ at round $j$ and $v$ sends a probe-message to $x$ at round $j'$, where $1 \leq j < j' \leq i$. If $x$ had seen an ID higher than $ID_u$ up until round $j$, then $x$ immediately informs $u$ and $u$ becomes inactive. Contradiction.
			
		So suppose not. Then, after round $j$, $x$ sets its local variables $L_x$ and $N_x$ to $ID_u$ and $u$ respectively. Let $k > j$ be the smallest integer such that $x$ receives a probe-message from a neighbor $u'$ at round $k$, where $ID_{u'} > ID_u$. Note that $v$ sends a probe-message to $x$ at round $j'$, where $j < j' \leq i$, and $ID_v > ID_u$. Therefore such a $u'$ exists. Then, after round $k$, $x$  sets its local variables $L_x$ and $N_x$ to $ID_{u'}$ and $u'$ respectively, and informs $u$ of this change. $u$ becomes inactive at that point of time, i.e., after round $k$, where $k \leq j' \leq i$. Contradiction.
			
		\item \textbf{Case $2$ ($u$ and $v$ each sends a probe-message to $x$ at the same time):} Suppose that $u$ and $v$ each sends a probe-message to $x$ at the same round $j$, where $1 \leq j \leq i$. Since $ID_v > ID_u$, $x$ has at least one neighbor $u'$ such that $ID_{u'} > ID_u$. Therefore $x$ would not set $L_x$ to $ID_u$ (or $N_x$ to $u$), and $x$ would inform $u$ about that after round $j$, causing $u$ to then become inactive. Contradiction.
			
		\item \textbf{Case $3$ ($u$ sends a probe-message to $x$ after $v$ does):} Suppose $v$ sends a probe-message to $x$ at round $j$ and $u$ sends a probe-message to $x$ at round $j'$, where $1 \leq j < j' \leq i$. Then $x$ sets its local variables $L_x$ and $N_x$ to $ID_v$ and $v$, respectively, after round $j$. So when $u$ comes probing at round $j' > j$, $x$ tells $u$ about $L_x \geq ID_v > ID_u$, causing $u$ to become inactive. Contradiction.
	\end{itemize}
\end{proof}

\begin{lemma}
	In round $i$, at most $3n$ messages are transmitted.
\end{lemma}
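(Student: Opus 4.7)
The plan is to split the messages sent in round $i$ into three disjoint categories according to the line of Algorithm~\ref{alg:deterministic:implicit} that generated them, and to bound each category by $n$.

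First I would count the \emph{probe messages} issued on line~7. Only nodes that are active at the beginning of round $i$ (equivalently, that were still active at the end of round $i-1$) can send probes. By Lemma~\ref{not-too-many-active-nodes-at-round-i} applied to round $i-1$, there are at most $n/2^{i-1}$ such nodes (and for $i=1$, trivially at most $n$). Each active node $v$ sends probes only to the neighbors with indices in $\{2^{i-1},\dots,\min\{d_v, 2^i-1\}\}$, so it sends at most $2^{i-1}$ probe messages in this round. Multiplying gives at most $(n/2^{i-1})\cdot 2^{i-1} = n$ probe messages in round $i$.

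Next I would count the \emph{reply messages} broadcast on line~21, where each node $v$ sends $L_v$ back to every node in the set $X$ of senders whose probes it received this round. Summed over all nodes, the total number of such replies equals the total number of probes received in round $i$, which equals the total number of probes sent in round $i$. By the previous paragraph this is at most $n$. Finally, I would count the \emph{demotion messages} generated on line~15, which occur only inside the if-block of line~14. Since the body of this if-block executes at most once per node per round (it depends only on the single quantity $ID_u$ computed from $X$), each node contributes at most one such message in round $i$, giving a bound of $n$.

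Adding the three contributions yields at most $3n$ messages in round $i$, which is the desired bound. The only subtle point, and the step I would be most careful about, is making sure that the probe bound really holds in round $1$ (where Lemma~\ref{not-too-many-active-nodes-at-round-i} does not directly apply but is trivial since each of the $n$ initially-active nodes sends exactly $1$ probe), and that demotion messages are correctly charged per round rather than per update across the whole execution; once these are checked, the three bounds combine immediately.
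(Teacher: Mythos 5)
Your proof is correct and follows essentially the same route as the paper: both arguments use Lemma~\ref{not-too-many-active-nodes-at-round-i} to bound the number of active nodes at the start of round $i$ by $n/2^{i-1}$, note that each sends at most $2^{i-1}$ probes, and then account for the two kinds of responses (Lines~\ref{tell-previous-master-about-new-master} and~\ref{tell-sender-about-current-master}). The only (immaterial) difference is that you charge the Line-\ref{tell-previous-master-about-new-master} messages at one per node per round, whereas the paper charges at most two responses to each probe; both tallies give $3n$.
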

\begin{proof}
	 In round $i$, each active node sends exactly $2^{i-1}$ probe messages, and each probe-message generates at most two responses (corresponding to Lines \ref{tell-previous-master-about-new-master} and 
	 \ref{tell-sender-about-current-master} of Algorithm \ref{alg:deterministic:implicit}). Thus, in round $i$, each active node contributes to, directly or indirectly, at most $3.2^{i-1}$ messages. The result 
	 immediately follows from Lemma \ref{not-too-many-active-nodes-at-round-i}.
\end{proof}

Since the algorithm runs for $\log{n}$ rounds, Theorem \ref{thm:deterministic} immediately follows.
\section{A Deterministic Lower Bound} \label{sec:detlb}

We will show a lower bound of $\Omega(n\log{n})$ message complexity by reducing the problem of ``leader election in complete graphs'' to that of ``leader election in graphs of diameter two''. This reduction itself would take two rounds and $O(n)$ messages. Then, since the former is known to have $\Omega(n\log{n})$ message complexity, the latter would have the same lower bound too (cf.\ Section \ref{sec:results}).

Suppose $\mathcal{A}$ is a leader election algorithm that works for any graph of diameter two. Let $G = (V, E)$ be our input instance for the problem of ``leader election in complete graphs'', i.e., $G$ is the complete graph on $n$ nodes, say.\\

\paragraph{The Reduction}
$G$ sparsifies itself into a diameter-two graph ($G'$, say, where $G' = (V, E')$, where $E' \subsetneq E$) on which $\mathcal{A}$ works thereafter. This sparsification takes $O(n)$ messages and a constant number of rounds (two, to be exact) and is done as follows.

\begin{itemize}
	\item \textbf{Round $1$:} Each node $v$  chooses one of its neighbours (any arbitrary one) and asks its ID. If this neighbour's ID is larger than its own ID, then v will ``drop" that edge, i..e., it won't use that 
	for communication in the subsequent simulation of $\mathcal{A}$. Otherwise $v$ will keep that edge.

	For $v \in V$, if $v$ has $\lceil\frac{n}{2}\rceil$ or more edges removed, then $v$ makes itself a ``candidate''.
	
	\item \textbf{Round $2$:} The candidates from the previous round send their ID's to all the nodes in the network using edges of $G$. By Lemma \ref{lemma-at-most-two-candidates}, there can be at most two such nodes. 
	Thus the total number of messages sent is still $O(n)$. Then each node (including the candidates themselves) receives the ID's of up to two candidates and chooses the highest of them to be the ID of the leader.
\end{itemize}

If no such node exists which has had $\lceil\frac{n}{2}\rceil$ or more edges removed, then $G'$ has diameter two (please refer to Lemma \ref{lemma-deterministic-reduction}), and we run $\mathcal{A}$ on $G'$. 
$\mathcal{A}$ returns a leader on $G'$ which makes itself the leader of $G$ too, and informs all its neighbors. This takes $O(n)$ messages.

\subsection{Proof of Correctness}

\begin{obs}\label{obs-at-most-n-1-edges-removed}
	$E$ has at most $n-1$ edges more than $E'$.
\end{obs}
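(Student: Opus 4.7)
The plan is to bound the number of dropped edges by a simple counting argument based on which endpoint initiates a drop. In Round~1 of the reduction, each node $v \in V$ chooses exactly one neighbor, and $v$ drops the edge to that neighbor if and only if that neighbor has a strictly larger ID than $v$. I will show that each node drops at most one edge, and that at least one node — the one with the globally maximum ID — drops none, giving a total of at most $n-1$ dropped edges.

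First, I would observe that because each node $v$ picks only a single neighbor to query, $v$ can be the ``initiator'' of at most one edge removal. Thus the number of dropped edges is at most $|\{v \in V : v \text{ drops its chosen edge}\}|$. Next, I would argue that an edge $(u,v)$ that is dropped is initiated by exactly one of its endpoints, namely the one with the smaller ID: if $u$ picked $v$ and $ID_v > ID_u$, then $u$ drops $(u,v)$, but even if $v$ had simultaneously picked $u$, the condition $ID_u > ID_v$ would fail for $v$, so $v$ would not drop $(u,v)$. Consequently, there is no double counting — each dropped edge is charged to a unique ``dropping'' endpoint (the smaller-ID one).

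Finally, let $v^{\max}$ denote the node with the largest ID in $V$. Whichever neighbor $v^{\max}$ picks in Round~1 necessarily has a smaller ID than $v^{\max}$ itself, so $v^{\max}$ never drops its chosen edge. Hence at most $n - 1$ of the $n$ nodes contribute to an edge drop, and by the injectivity established above, at most $n-1$ edges of $E$ are removed in the construction of $E'$. This gives $|E| - |E'| \le n - 1$, as claimed.

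I do not expect any real obstacle here: the only thing to be careful about is ensuring that the same edge is not counted twice when both of its endpoints happened to select it, which is handled by the ID tie-breaking observation above.
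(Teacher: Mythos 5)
Your proof is correct and follows essentially the same argument as the paper: each node drops at most one edge (its single chosen one), the maximum-ID node drops none, hence at most $n-1$ edges are removed. Your extra observation about avoiding double counting is careful but not strictly needed, since counting an edge twice could only decrease the number of distinct dropped edges.
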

\begin{proof}
	Each node except the node with the highest ID drops at most one edge. The node with the highest ID drops no edge.
\end{proof}

\begin{lemma}\label{lemma-at-most-two-candidates}
	For $n \geq 3$, there can be at most two nodes in $G'$ that has had $\lceil\frac{n}{2}\rceil$ or more edges removed.
\end{lemma}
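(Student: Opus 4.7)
The plan is a disjoint-sets counting argument on the Round~$1$ queries. For every node $u$, let $q(u)$ denote the single neighbor that $u$ queries in Round~$1$; then $u$ drops the edge $(u,q(u))$ iff $ID(q(u))>ID(u)$. Hence for any node $v$, the number of removed edges incident to $v$ equals $a(v)+b(v)$, where $a(v)=|\{u:q(u)=v\text{ and }ID(u)<ID(v)\}|$ counts the incoming lower-ID queries and $b(v)\in\{0,1\}$ is the indicator that $v$'s own query targeted a higher-ID neighbor. In particular, any node $v$ with at least $\lceil n/2\rceil$ incident removed edges satisfies $a(v)\geq \lceil n/2\rceil-1$.

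Suppose for contradiction that three distinct nodes $v_1,v_2,v_3$ each have at least $\lceil n/2\rceil$ removed edges, and set $Q_i:=\{u:q(u)=v_i\text{ and }ID(u)<ID(v_i)\}$, so $|Q_i|\geq \lceil n/2\rceil-1$. Because each node's $q(u)$ is unique, the sets $Q_1,Q_2,Q_3$ are pairwise disjoint, and because each $u\in Q_i$ corresponds to a distinct dropped edge, Observation~\ref{obs-at-most-n-1-edges-removed} yields
\[
    |Q_1|+|Q_2|+|Q_3|\leq n-1.
\]
Combining the two bounds gives $3\lceil n/2\rceil\leq n+2$, which is false for $n=3$ and for every $n\geq 5$, establishing the lemma in those cases.

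The only tight case is $n=4$, where the inequality degenerates to $6\leq 6$. This will be handled by observing that tightness forces every removed edge to have both endpoints inside $\{v_1,v_2,v_3\}$: the sum of the three candidates' removed-edge counts is at least $6$, while the total number of removed-edge endpoints is at most $2(n-1)=6$, so the fourth node has no incident removed edge and the three removed edges form a triangle on $\{v_1,v_2,v_3\}$. Labeling the candidates so that $ID(v_1)<ID(v_2)<ID(v_3)$, each dropped triangle edge must have been dropped by its lower-ID endpoint (the higher-ID endpoint never drops); in particular both $(v_1,v_2)$ and $(v_1,v_3)$ would force $q(v_1)$ to be two distinct neighbors simultaneously, a contradiction. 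The only real obstacle is precisely the tightness at $n=4$; everywhere else the disjointness counting closes the argument immediately.
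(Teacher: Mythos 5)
Your proof is correct. It is the same basic strategy as the paper's — a double-counting argument played off against Observation~\ref{obs-at-most-n-1-edges-removed} — but the bookkeeping differs in a way worth noting. You count, for each candidate $v_i$, only the incident removed edges that were dropped \emph{by a neighbor querying $v_i$} (discarding the at most one edge $v_i$ dropped itself), which buys you pairwise-disjoint witness sets $Q_i$ and hence the clean bound $3(\lceil n/2\rceil-1)\le n-1$; the price is a loss of $1$ per candidate, which makes the inequality degenerate exactly at $n=4$ and forces your separate tightness analysis there. The paper instead counts all removed edges incident to the three candidates (sum $\ge 3\lceil n/2\rceil$) and controls the double-counting directly: a removed edge is counted twice only if it lies inside the candidate triangle, and at most two of the three triangle edges can be removed (since each edge is dropped by its lower-ID endpoint and a node drops at most one edge), giving $3\lceil n/2\rceil-2 > n-1$ uniformly for all $n\ge 3$ with no special case. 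Note that the single-query observation you invoke to kill the $n=4$ triangle is exactly the ingredient the paper uses up front to get the sharper constant, so the two proofs rest on identical facts; yours just deploys one of them later.
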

\begin{proof}
	We consider the two cases --- when $n$ is even and when $n$ is odd --- separately in order to make the presentation simpler.\\
	
	\begin{itemize}
        \item \textbf{Case $1$: $n = 2k$ for some integer $k \geq 2$.}
	
        \paragraph{Proof by contradiction}
        Suppose that there are three or more nodes that have had $\lceil\frac{n}{2}\rceil  =  k$ or more edges removed each (either by themselves or by their neighbors). Let $u$, $v$, and $w$ be three such nodes. Since an edge is removed only if one of the incident nodes has a higher ID than the other, all of $(u, v)$, $(v, w)$, and $(w, u)$ cannot have been removed. Thus the total number of edges removed is at least $3k - 2  >  2k - 1$, which contradicts Observation \ref{obs-at-most-n-1-edges-removed}.\\

        \item \textbf{Case $2$: $n   =   2k + 1$ for some integer $k \geq 1$.}
	
        \paragraph{Proof by contradiction}
        Suppose that there are three or more nodes that have had $\lceil\frac{n}{2}\rceil   =   k + 1$ or more edges removed each (either by themselves or by their neighbors). Let $u$, $v$, and $w$ be three such nodes. Since an edge is removed only if one of the incident nodes has a higher ID than the other, all of $(u, v)$, $(v, w)$, and $(w, u)$ cannot have been removed. Thus the total number of edges removed is at least $3(k+1) - 2   >   2k$, which contradicts Observation \ref{obs-at-most-n-1-edges-removed}.
    \end{itemize}
\end{proof}

\begin{lemma}\label{lemma-deterministic-reduction}
	If no node exists in $G'$ which has had $\lceil\frac{n}{2}\rceil$ or more edges removed, then $G'$ has diameter two.
\end{lemma}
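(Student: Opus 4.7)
The plan is to convert the lemma's hypothesis into a uniform lower bound on vertex degrees in $G'$ and then use inclusion--exclusion to guarantee a common neighbor for every pair of non-adjacent vertices. I fix distinct $u, v \in V$; if $(u, v) \in E'$ then $\mathrm{dist}_{G'}(u, v) = 1$ and we are done, so assume $(u, v) \notin E'$. The goal then becomes to exhibit some $w \in V \setminus \{u, v\}$ with both $(u, w) \in E'$ and $(v, w) \in E'$.

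Since $G$ is the complete graph, the degree of any node $x$ in $G'$ equals $n - 1$ minus the number of incident edges that have been removed during the sparsification step. By the lemma's hypothesis the latter quantity is at most $\lceil n/2 \rceil - 1$, so $\deg_{G'}(x) \ge n - \lceil n/2 \rceil = \lfloor n/2 \rfloor$ for every $x \in V$. Moreover, because $(u, v) \notin E'$, both neighbor sets $N_{G'}(u)$ and $N_{G'}(v)$ are contained in the $(n - 2)$-element set $V \setminus \{u, v\}$. Applying inclusion--exclusion inside that set yields
\[
  |N_{G'}(u) \cap N_{G'}(v)| \;\ge\; |N_{G'}(u)| + |N_{G'}(v)| - (n - 2) \;\ge\; 2\lfloor n/2 \rfloor - n + 2 \;\ge\; 1
\]
for every $n \ge 3$: both parities of $n$ produce a strictly positive bound (the right-hand side evaluates to $2$ when $n$ is even and to $1$ when $n$ is odd). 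Any $w$ in this nonempty intersection witnesses a length-$2$ path $u \to w \to v$ in $G'$, completing the argument.

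The proof is essentially arithmetic, so there is no deep obstacle. The only item requiring care is tracking the parity of $n$ when translating the bound $\lceil n/2 \rceil - 1$ on removed incident edges into the clean degree lower bound $\lfloor n/2 \rfloor$, and observing that the small cases $n \le 2$ are already excluded by the surrounding reduction (Lemma~\ref{lemma-at-most-two-candidates} is stated for $n \ge 3$, which is the regime where the candidate machinery is interesting). No additional combinatorial ingredient beyond degree-counting and inclusion--exclusion should be needed.
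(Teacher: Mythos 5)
Your proof is correct and follows essentially the same route as the paper's: a uniform lower bound of $\lfloor n/2\rfloor$ on degrees in $G'$ followed by inclusion--exclusion to count common neighbors, with the paper merely splitting the even/odd cases explicitly where you use floor/ceiling notation. The only (trivial) omission is that you establish diameter \emph{at most} two rather than \emph{exactly} two; the paper adds the one-line observation that the minimum-ID node always drops an edge, so $G'$ is not complete and its diameter is not one.
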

\begin{proof}
	Clearly $G'$ is not of diameter one since the node with the smallest ID in $V$ always drops at least one edge.
	
	Next we show that for any $u, v \in V$, either $u$ and $v$ are directly connected in $G'$ or $\exists w \in V$ such that $(u, w) \in E'$ and $(w, v) \in E'$.
	
	We consider the two cases --- when $n$ is even and when $n$ is odd --- separately in order to make the presentation simpler.
	
	\begin{itemize}
	
        \item \textbf{Case $1$: $n = 2k$ for some integer $k \geq 2$.}
	
        Since no node exists in $G'$ which has had $\lceil\frac{n}{2}\rceil  =  k$ or more edges removed, every node in $G'$ has degree at least $(n - 1) - (k - 1)  =  k$. Thus for any $u, v \in V$, if $(u, v) \notin E'$, then there are at least $k + k - (n - 2) = 2$ nodes in $V \setminus \left\{u, v\right\}$ that are common neighbors to both $u$ and $v$.\\

        \item \textbf{Case $2$: $n  =  2k + 1$ for some integer $k \geq 1$.}
	
        Since no node exists in $G'$ which has had $\lceil\frac{n}{2}\rceil  =  k + 1$ or more edges removed, that implies that every node in $G'$ has degree at least $(n - 1) - k  =  k$. Thus for any $u, v \in V$, if $(u, v) \notin E'$, then there is at least $k + k - (n - 2) = 1$ node in $V \setminus \left\{u, v\right\}$, which is a common neighbor to both $u$ and $v$.
    \end{itemize}
\end{proof}

\section{Conclusion} \label{sec:conc}

We settle the message complexity of leader election throughout the diameter spectrum, by presenting almost tight bounds (tight upto $\text{polylog}(n)$ factors) for diameter-two graphs which were left open by previous results \cite{Kutten_2015_TCS, Kutten_2015_JACM}. Several open problems arise from our work.

\begin{enumerate}
    \item Is it possible to show a high probability bound of $O(n)$ messages for randomized leader election that runs in $O(1)$ rounds? This will match the lower bounds, by closing the $\text{polylog}(n)$ factor. It might be possible to improve the analysis of our randomized algorithm to show $O(n\log{n})$ messages.
    
    \item Another very interesting question is whether \emph{explicit} leader election (i.e., where all nodes should also know the identity of the leader) can be performed in $\tilde{O}(n)$ messages in diameter-two graphs (this is true for complete graphs, but not for diameter three and beyond).
    
    \item The question of explicit leader election naturally begs the question whether \emph{broadcast}, another fundamental problem in distributed computing, can be solved in \emph{diameter-two} graphs with $\tilde{O}(n)$ messages and $O(\text{polylog}(n))$ rounds if $n$ is known.
    \footnote{In contrast, we note that $\Omega(m)$ is a lower bound for broadcast on graphs of \emph{diameter at least three}, even if $n$ is known and even for randomized algorithms \cite{Kutten_2015_JACM}.}
    
    \item Removing the assumption of the knowledge of $n$ (or showing that it is not possible) for deterministic algorithms with $\tilde O(n)$ message complexity and running in $\tilde{O}(1)$ rounds is open as well.
\end{enumerate}


\newpage

\bibliography{diameter_two_leader_election_references}
\bibliographystyle{plain}

\newpage
\appendix
\section*{Appendix}

\begin{proof}[Proof of Lemma \ref{lemma-Lagrangian-Hessian}]
	We define the Lagrangian as
	\begin{equation}\label{eqn-Lagrangian-definition}
		\mathcal{L}  \defeq  f(x_1, x_2, \ldots, x_n) - \lambda(\sum_{i=1}^n x_i - C)
	\end{equation}
	where $\lambda$ is the Lagrange multiplier. We can find the \emph{critical points} of the Lagrangian by solving the set of equations
	\begin{equation}\label{eqn-Lagrangian-1}
		\frac{\partial f}{\partial x_i}  =  \lambda \frac{\partial \sum_{i = 1}^n x_i}{\partial x_i}\text{ for }i = 1, 2, \ldots, n
	\end{equation}
	and
	\begin{equation}\label{eqn-constraint}
		\sum_{i=1}^n x_i  =  C
	\end{equation}
	Simplifying Equation \ref{eqn-Lagrangian-1}, we get
	\begin{equation}\label{eqn-Lagrangian-2}
		-\frac{\log{x_i}}{x_i^2}  =  \lambda\text{ for }i = 1, 2, \ldots, n
	\end{equation}

	One possible (feasible) solution of Equations \ref{eqn-Lagrangian-2} and \ref{eqn-constraint} is,
	\begin{equation}\label{eqn-solution}
		x_i  =  \frac{C}{n}\text{ for all }1 \leq i \leq n
	\end{equation}
	and
	\begin{equation}\label{eqn-solution-lambda}
		\lambda^*  =  -\frac{\log{(\frac{C}{n})}}{(\frac{C}{n})^2}
	\end{equation}
	Let $X^*$ be a vector of dimension $n$ defined by $X^*  \defeq  (\frac{C}{n}, \frac{C}{n}, \ldots, \frac{C}{n})$. Then we have already shown that $X^*$ and $\lambda^*$ are a \emph{critical point} for the Lagrange function $\mathcal{L}$. We claim that $X^*$ is also a local minima for $f(x)$ under the constraint of Equation \ref{eqn-constraint}.\\
	
	We show that by constructing the Bordered Hessian matrix $H^B$ of the Lagrange function. Let $L^*_{ij}   \defeq   \left.\frac{\partial}{\partial x_j} (\frac{\partial \mathcal{L}}{\partial x_i})\right\rvert_{X^*}$, where $\mathcal{L}$ is the Lagrange function as defined in Equation \ref{eqn-Lagrangian-definition}. Then
	\[
	H^B  =
	\begin{bmatrix}
		0 & 1 & 1 & \cdots & 1 \\
		1 & L^*_{11} & L^*_{12} & \cdots & L^*_{1n}\\
		1 & L^*_{21} & L^*_{22} & \cdots & L^*_{2n}\\
		\vdots & \vdots & \vdots & \ddots & \vdots\\
		1 & L^*_{n1} & L^*_{n2} & \cdots &L^*_{nn}
	\end{bmatrix}
	\]
	We note that $L^*_{ii} = \frac{2\log{(\frac{C}{n})} - 1}{(\frac{C}{n})^3}  -  \lambda^*$ for all $1 \leq i \leq n$, and $L^*_{ij} = 0$ for all $(i, j)$ such that $i \neq j$. Hence
	
	\[
	H^B  =
	\begin{bmatrix}
		0 & 1 & 1 & \cdots & 1 \\
		1 & \frac{2\log{(\frac{C}{n})} - 1}{(\frac{C}{n})^3}  -  \lambda^* & 0 & \cdots & 0\\
		1 & 0 & \frac{2\log{(\frac{C}{n})} - 1}{(\frac{C}{n})^3}  -  \lambda^* & \cdots & 0\\
		\vdots & \vdots & \vdots & \ddots & \vdots\\
		1 & 0 & 0 & \cdots & \frac{2\log{(\frac{C}{n})} - 1}{(\frac{C}{n})^3}  -  \lambda^*
	\end{bmatrix}
	\]
	We show that $H^B$ is \emph{positive definite} (which is a sufficient condition for $X^*$ to be a local minima) by checking the signs of the leading principal minors. For any $1 \leq i \leq n$, $|H^B_i|$ is the determinant of a square matrix of dimension $i+1$, and is given by
	\[
	|H^B_i|  =
	\begin{vmatrix}
		0 & 1 & 1 & \cdots & 1 \\
		1 & \frac{2\log{(\frac{C}{n})} - 1}{(\frac{C}{n})^3}  -  \lambda^* & 0 & \cdots & 0\\
		1 & 0 & \frac{2\log{(\frac{C}{n})} - 1}{(\frac{C}{n})^3}  -  \lambda^* & \cdots & 0\\
		\vdots & \vdots & \vdots & \ddots & \vdots\\
		1 & 0 & 0 & \cdots & \frac{2\log{(\frac{C}{n})} - 1}{(\frac{C}{n})^3}  -  \lambda^*
	\end{vmatrix}
	\] \\
	$=  -i(\frac{2\log{(\frac{C}{n})} - 1}{(\frac{C}{n})^3}  -  \lambda^*)^{i-1}$.
	
	\begin{align*}
		&\text{But }\frac{2\log{(\frac{C}{n})} - 1}{(\frac{C}{n})^3}   >   0 \tag{since $C \geq n\sqrt{2}$, $2\log{(\frac{C}{n})} - 1   >   0$}\\
		&\text{and }\lambda^*  =  -\frac{\log{(\frac{C}{n})}}{(\frac{C}{n})^2}  <  0\text{.}
    \end{align*}
    
    Hence
    \begin{align*}
		&\frac{2\log{(\frac{C}{n})} - 1}{(\frac{C}{n})^3}  -  \lambda^*  >  0\\
		&\implies     -i(\frac{2\log{(\frac{C}{n})} - 1}{(\frac{C}{n})^3}  -  \lambda^*)^{i-1}   <   0\\
		&\text{i.e., }|H^B_i|  <  0\text{ for all }1 \leq i \leq n\\
		&\implies  H^B\text{ is positive definite.}
	\end{align*}
\end{proof}

\end{document}